\newcommand{\bx}{{\mathbf{x}}}
\newcommand{\bt}{{\mathbf{t}}}
\newcommand{\sign}{\mathsf{sign}}
\newcommand{\indic}{{\mathbf{1}}}
\newcommand{\dimm}{\mathrm{d}}
\newcommand{\sweep}{\mathsf{sweep}}
\newcommand{\R}{{\mathds{R}}}
\newcommand{\SO}[1]{{\mathrm{SO}(#1)}}
\newcommand{\SE}[1]{{\mathrm{SE}(#1)}}
\newcommand{\supp}{\mathsf{supp}}
\newcommand{\asM}{\mathsf{M}}
\newcommand{\asD}{\mathsf{D}}
\newcommand{\conf}{{\mathsf{C}}}
\newcommand{\motion}{T}
\newcommand{\mmf}{B}
\newcommand{\ec}{\chi}
\newcommand{\ecc}{\hat{\ec}}
\newcommand{\beti}{\beta}
\renewcommand{\th}{$^\text{th}$ }
\theoremstyle{definition}
\newtheorem{defn}{Definition}
\newtheorem{lemma}{Lemma}
\newtheorem{coro}{Corollary}
\newcommand{\parag}[1]{\paragraph{\bf {#1}}}
\newcommand{\eq}[1]{(\ref{#1})} 
\newcommand{\com}[1]{} 
\journal{Symposium of Solid and Physical Modeling}
\begin{document}

\begin{frontmatter}

\title{A Classification of Topological Discrepancies in Additive Manufacturing}
 
\author{Morad Behandish, Amir M. Mirzendehdel, and Saigopal Nelaturi}
\address{\rm Palo Alto Research Center (PARC), 3333 Coyote Hill Road, Palo Alto,
	California 94304  \vspace{-15pt}}

\begin{abstract}

Additive manufacturing (AM) enables enormous freedom for design of complex
structures. However, the process-dependent limitations that result in
discrepancies between as-designed and as-manufactured shapes are not fully
understood. The tradeoffs between infinitely many different ways to approximate
a design by a manufacturable replica are even harder to characterize. To support
design for AM (DfAM), one has to quantify local discrepancies introduced by AM
processes, identify the detrimental deviations (if any) to the original design
intent, and prescribe modifications to the design and/or process parameters to
countervail their effects. Our focus in this work will be on topological
analysis. There is ample evidence in many applications that preserving local
topology (e.g., connectivity of beams in a lattice) is important even when
slight geometric deviations can be tolerated. We first present a generic method
to characterize local topological discrepancies due to material under- and
over-deposition in AM, and show how it captures various types of defects in the
as-manufactured structures. We use this information to systematically modify the
as-manufactured outcomes within the limitations of available 3D printer
resolution(s), which often comes at the expense of introducing more geometric
deviations (e.g., thickening a beam to avoid disconnection). We validate the
effectiveness of the method on 3D examples with nontrivial topologies such as
lattice structures and foams.

\end{abstract}

\begin{keyword}
	Additive Manufacturing \sep
	3D Printing \sep
	Topological Analysis \sep
	Euler Characteristic \sep
	Betti Numbers
\end{keyword}

\end{frontmatter}


\section{Motivation}

Additive manufacturing (AM) has overcome many of the limitations imposed on
design by traditional formative and subtractive manufacturing processes. It
enables making complex light-weight parts with superior structural
\cite{Robbins2016efficient} or thermal \cite{Haertel2017fully} performance,
among other functional benefits. The added flexibility and complexity, however,
come with a new set of unsolved computational challenges. Often, the
\textit{as-manufactured} structures differ from the \textit{as-designed} targets
in ways that are harder to characterize, quantify, and correct compared to
traditional processes such as machining \cite{Balic2006intelligent}. The
deviations depend not only on the geometric and material properties of the part,
but also on pre-processing (e.g., slicing and path-planning) and AM process
parameters such as build direction, resolution, and temperature. The
manufacturing limitations become evident only after the build process is
finished \cite{Nouri2016structural}.

In many cases, the deviations of the as-manufactured structure from the
as-designed target are not completely preventable. However, the deviations can
be controlled, by slight modification of the design or process, to prevent
unpredictable failures. Even when an as-designed target is not manufacturable
as-is, {\it there are infinitely many manufacturable alternatives that closely
	approximate its shape,} while preserving the design intent and function. For
example, if the as-designed shape has smaller features than the 3D printer's
resolution, there may be many plausible ways to thicken its small structural or
aesthetic features with no compromise in its form, fit, and function.

In this paper, we focus on preserving shape properties, particularly those
pertaining to {\it topological integrity} of the design. For many AM structures
(e.g., infill lattices \cite{Wu2016self,Wu2018infill} and foams
\cite{Martinez2016procedural,Martinez2017orthotropic}), the topological
integrity of the structure has substantial functional significance---commonly
even more important than geometric precision. Often times, ``small'' enough
deviations (from a metric standpoint) from the as-designed geometry may lead to
changes in topology that compromise performance. For example, if the shape of a
lattice structure is slightly deformed due to the stair-stepping effect of
layered fabrication, it may not matter as much as preserving its connectivity.
Common examples of functional failure are broken beams in load-bearing
micro-structures, covered tunnels in heat exchanger micro-channels, filled
cavities in porous meta-materials, and so on. Topological properties can impact
manufacturing post-processing as well, such as powder removal in SLS or DMLS.

We present a novel approach to detect and classify local contributions to
topological discrepancies to enable surgical corrections that preserve the
global and local topology.

\subsection{Related Work}

There are few computational tools for predicting and/or alleviating shape
defects introduced by AM. Most of the methods that study the quality of AM
processes use geometric measures such as volumetric error
\cite{Cajal2013volumetric,Tong2008error,Telea2011voxel}, dimensional accuracy
\cite{Ibrahim2009dimensional,Rao2016assessment,Silva2008dimensional}, or surface
finish \cite{wang2016improved} as the main criteria. In other words, feasibility
of an AM process for a given design is mainly assessed either through visual and
qualitative evaluation \cite{Mahesh2004benchmarking} or quantitative and
statistical analysis on feature sizes and tolerances
\cite{Arrieta2012quantitative,Cabiddu2017maps}. For instance
\cite{Nelaturi2015manufacturability,Nelaturi2015representation} demonstrated
methods to identify, visualize, and correct thin features in 3D printed parts.
However, they offer limited insight on topological consequences of deviations
from the original design.

Despite its significance, topology-aware DfAM has received less attention than
deserved. For example, a recent study \cite{Rosen2018inferring} proposed using
methods from topological data analysis such as mapper graphs and persistent
homology \cite{Edelsbrunner2008persistent} to assess the quality of point cloud
data for 3D printing along a given build direction. The method is focused on
data quality and representation rather than process effects. Another study
\cite{Liu2015identification,Li2016structural} proposed a method to restrict
enclosed voids in topology optimization. The objective function is penalized by
simulated results of artificial heat transfer with large conductivity assigned
to the voids. The heuristics are effective in alleviating cavity formations but
provide no guarantees. In \cite{Nelaturi2015manufacturability}, the authors used
medial axes to separate and thicken thin features. Medial axes are hard to
compute due to their instability in the presence of noise
\cite{Attali2009stability}. In \cite{Cabiddu2017maps}, the authors used a
different criterion for global thin feature detection based on topological
effects of $\epsilon-$ball inclusion/exclusion. Most existing methods use
heuristics for thickening of thin features, rarely consider other classes of
features (e.g., tunnels/cavities), and do not quantify the local topological
errors in a way that can be used in trade-off with other criteria.

\begin{figure}
	\centering
	\includegraphics[width=0.46\textwidth]{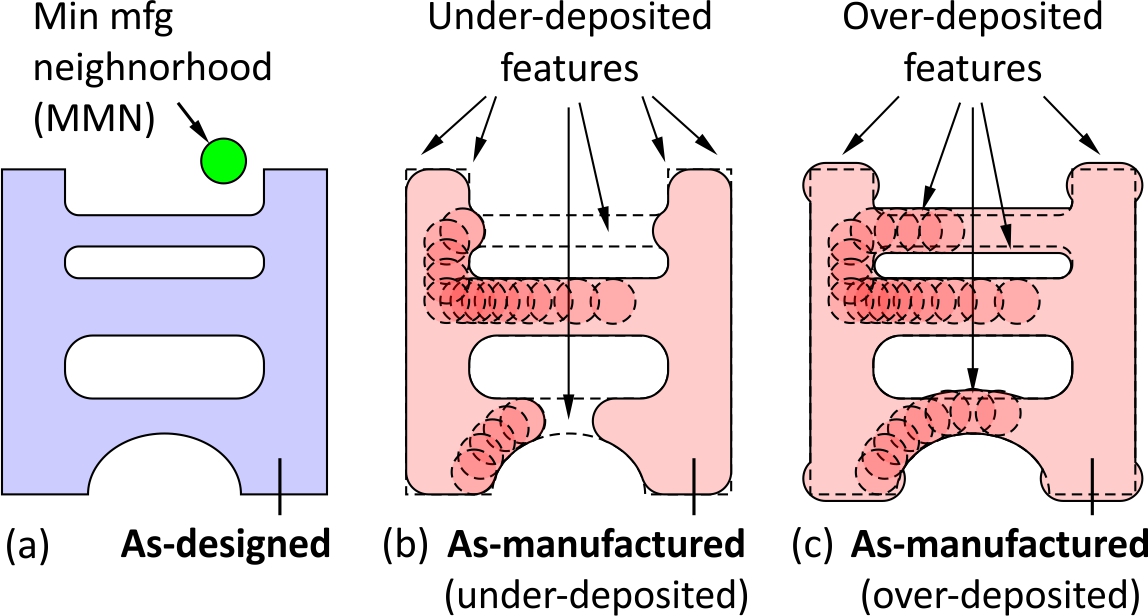}
	\caption{Consider one 2D slice in 3D printing with an (exaggerated) minimum
	feature size constraint. Different approximations to the as-designed slice can
	be obtained in the trade-space of desired properties: (a) losing connections
	and holes to remain within the boundaries; (b) preserving topology at the
	expense of over-depositing.} \label{fig_idea}
\end{figure}

Our goal is to distinguish features (of arbitrary shape) that adversely affect
the topology from those that do not. For instance, a missing bridge between two
connected components in the as-manufactured shape is qualitatively different
from a sharp corner being rounded off without affecting connectivity (Fig.
\ref{fig_idea}). The former changes the local topology, even if the global
topology (e.g., total number of connected components) is intact due to possible
existence of other connections. Hence, global topological analysis based on Reeb
graphs, mapper graphs, persistent homology, and other tools of the trade can
miss them.

\subsection{Contributions \& Outline}

This paper presents a novel computational framework to quantify and correct 
topological discrepancies for AM parts; specifically the contributions of the 
paper are:
\begin{enumerate}
	\item presenting parameterized families of manufacturable alternatives that
	closely approximate a given design of arbitrary shape for a given AM
	capability;
	\item quantifying as-manufactured deviations from the as-designed model in
	terms of under- and over-deposition (UD/OD) regions for every such alternative;
	\item characterizing the local contributions of the deviant UD/OD components to
	the global topological discrepancies in terms of Euler characteristic (EC); and
 	\item prescribing a systematic process to locally adjust the AM instrument
	motions to counteract the topological deviations with minimal geometric
	alteration.
\end{enumerate}
In Section \ref{sec_model}, we extend previous work on measure-theoretic
parametrization of as-manufactured models for a given as-designed model and AM
capability \cite{Nelaturi2015manufacturability,Nelaturi2015representation,
	Behandish2018automated}.

\begin{figure*}
	\centering
	\includegraphics[width=0.96\textwidth]{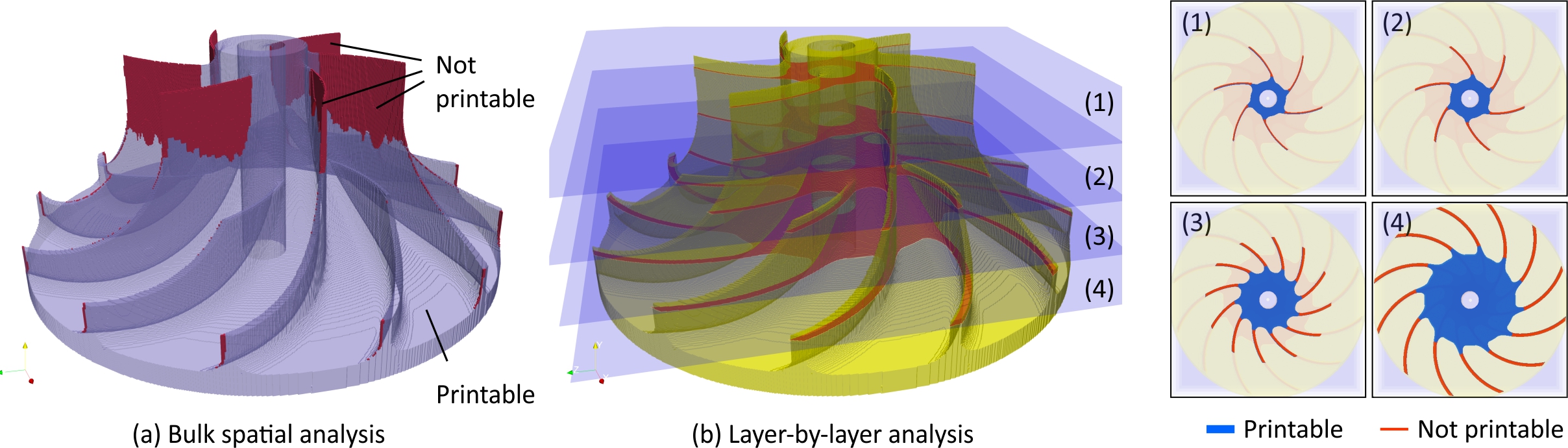}
	\caption{AM analysis can be performed in 3D (a) to obtain a first approximation
	based on feature size without considering build direction, support structures,
	etc. A more precise analysis can be done in 2D a slice-by-slice basis (b) after
	the build direction is selected. The latter also has the benefit of allowing
	much higher resolutions due to lower dimensions. The most accurate analysis is
	done after the precise 1D trajectories are selected and parameterized (e.g.,
	G-code). An exaggerated MMN size is used to make the under-depositions obvious.
	In this case, there are no topological changes due to the non-printable regions
	on the blades. The 3D model is obtained from \textsf{GrabCAD}.} \label{fig_methods}
\end{figure*}

In Section \ref{sec_UDOD}, we present a novel approach of comparative
topological analysis (CTA) for a pair of arbitrary shapes. We decompose the
inevitable discrepancies between the as-designed and as-manufactured shapes due
to process limitations into UD/OD ``features.'' We present a novel computational
approach to quantify their local contributions to the global topological
changes. Unlike global quantifiers such as the EC and Betti numbers (BN) that
count the total number of connected components, tunnels, and cavities in 3D, our
analysis provides local and precise {\it spatial information} about the
comparative topology of the as-designed and as-manufactured shapes.

A theoretical contribution with major computational utility is a general formula
to {\it locally quantify topological differences between two arbitrary
	overlapping shapes}. Using the additivity of EC, we define the fundamental
notion of (topological) {\it simplicity} to decompose and classify the set
differences between two arbitrary pointsets.

The classification of local discrepancies is used for surgical modification of
the design and process to alleviate undesirable topological discrepancies. We
propose a simple approach to locally update the deposition policy while
retaining the minimum feature size constraints. The approach can be applied
recursively to find the ``best'' manufacturable approximation to the target
design---among many alternatives, as exemplified in Fig. \ref{fig_idea}---that
preserves its topological integrity with the least necessary geometric
deviation. However, topological integrity is but one of many (potentially
competing) criteria for multi-objective design space exploration
\cite{Nelaturi2019exploring} and has to be satisfied/optimized for the best
trade-offs. Design space exploration is beyond the scope of this paper.
\section{Geometric Modeling for AM} \label{sec_model}

In this section, we present a general approach for modeling manufacturable
alternatives that closely approximate a given as-designed target of arbitrary
shape, while complying with geometric limitations of AM (Section
\ref{sec_general}). Our model offers immense flexibility by generating {\it
	families} of as-manufactured variants over a range of custom AM parameters such
as resolution and deposition allowance, rather than a single model (Section
\ref{sec_family}). We conclude this section by a discussion of future directions
for extending this model to {\it spatially varying} allowance fields to provide
additional freedom for local adjustments (Section \ref{sec_model_local}).

\subsection{From As-Designed to As-Manufactured} \label{sec_general}

Practical limitations on the AM resolution and wall thickness are the source of
inevitable geometric and topological discrepancies between the as-designed
target and as-manufactured outcome. For example, the layer thickness in direct
metal laser sintering (DMLS) is typically in the range 0.3--0.5 mm along the
deposition layer and 20--30 microns along the build direction at the highest
resolution, with a minimum hole diameter of 0.90--1.15 mm within a typical
workspace of $250 \times 250 \times 325$ mm$^3$. Attempting to fabricate designs
that have smaller features will result in disconnected beams, filled
holes/tunnels, or hard-to-predict combinations. The as-manufactured part may
eventually look nothing like the as-designed target.

As a first approximation, manufacturability analysis can be formulated as a
purely geometric problem. Earlier work
\cite{Nelaturi2015manufacturability,Nelaturi2015representation} has developed
methods to model as-manufactured structures using basic notions from group
morphology \cite{Lysenko2010group}. Similar approaches have also been used to
generate hybrid (interleaved additive and subtractive) manufacturing process
plans \cite{Behandish2018automated}. We briefly review the relevant elements of
these methods used in this paper.

An as-manufactured shape is obtained by sweeping a minimum manufacturable
neighborhood (MMN) of arbitrary shape along an arbitrary motion that is allowed
by the machine's degrees of freedom (DOF) \cite{Behandish2018automated}. For
example, most 3D printers operate by 3D translations of a printer head (no
rotational DOF) over the workpiece as they deposit a droplet of material whose
shape is represented by the MMN (e.g., an ellipsoid/cylindroid). The MMN's size
(e.g., diameters/height) and orientation are determined by the printer
resolution and the build direction. Unless the as-designed shape is perfectly
sweepable by the MMN via an allowable motion, the as-manufactured shape will
have to differ. The challenge is to find the ``best'' allowable motion of the
deposition head whose sweep of MMN results in a shape as close as possible to
the as-designed target. The answer is not unique, as it depends on the notion of
closeness; i.e., the criteria based on which the discrepancies are measured. One
such criterion comes from local topological considerations that impact function
(Section \ref{sec_UDOD}).

\subsection{Computing a Family of As-Manufactured Models} \label{sec_family}

For ease of illustration, the examples of this paper are restricted to AM
instruments with translational DOF only, i.e., the machine does not allow
rotations of the deposition head. This assumption covers commercial 3D printers
that deposit flat layers on top of each other. After presenting the results for
the simpler case of translations, we also discuss the more general case of
general rigid motions supported by high-axis CNC machines.

\parag{Manufacturability Measures}

Throughout this paper, we adopt a {\it measure-theoretic} approach to define and
compute as-manufactured shapes. At every point in the 3D space inside the
printer workspace---which represents a hypothetical translational configuration
of the printer head---we quantify manufacturability by computing the {\it
	overlap measure} (OM) between the stationary as-designed target and a
hypothetical MMN instance displaced to the query point. For example, the
analysis for 3D printing on flat layers with translational DOF can be performed
in at least two ways, as illustrated in Fig. \ref{fig_methods}:
\begin{enumerate}
	\item {\bf Bulk spatial analysis:} A 3D field of overlap measures is obtained
	between the 3D as-designed model and a 3D model of the MMN, e.g., a blob of
	material that is representative of a deposition unit. The layer thickness and
	build orientation may or may not be encoded into the shape of the MMN. The
	measure is the volume of intersections between 3D shapes.
	\item {\bf Layer-by-layer analysis:} The as-designed model is sliced along a
	fixed build direction into layers that are a constant distance apart, e.g.,
	obtained from a 3D printer's known layer thickness specs. For each 2D
	as-designed slice, a 2D field of overlap measure is constructed by using a 2D
	model of the MMN, e.g., nozzle or laser beam cross-section. The measure is the
	surface area of intersections between 2D shapes.
\end{enumerate}
(1) provides a rapid approximation to the latter and allows a high-level
analysis in the absence of slicing parameters. It can be computed in one shot
for digitized translational motions by a convolution of indicator (i.e.,
characteristic) functions of the as-designed shape and MMN in 3D
\cite{Nelaturi2015representation}. On a commodity computer, it can be computed
for resolutions of $\sim 10^2$--$10^3$ voxels per coordinate axis due to storage
limitations. (2) provides a mid-level analysis that takes build direction and
slicing into account, but ignores 1D tool path, digitization (e.g., G-code), and
other machine-level details. It can be scaled to resolutions as high as $\sim
10^5$ pixels per coordinate axis when each slice is stored and analyzed
separately as a bitmap image. For general rigid motions, it is replaced by a
parametric motion over a 2D curvilinear surface.

The drawback of (2) is that preserving topological properties per slice does not
guarantee preserving topological properties in 3D, which is what matters from a
functional standpoint. We will use 2D examples for illustration purposes, and
demonstrate 3D examples in Section \ref{sec_results}.

\parag{Computing AM Instrument Motions}

We represent both cases uniformly in terms of Lebesgue $\dimm-$measures
$\mu^\dimm[\cdot]$ ($\dimm = 2$ or $3$) of intersection between the as-designed
shape/slice, denoted by $\Omega_\asD \subseteq \R^\dimm$ and MMN (before
displacement), denoted by $\mmf \subseteq \R^\dimm$. These pointsets are assumed
to be solids (i.e., `r-sets') \cite{Requicha1980representations} defined as
compact (bounded and closed) regular and semianalytic subsets of the
$\dimm-$space, which automatically deems them {\it $\dimm-$measurable}. We
define the OM as a real-valued field over the {\it configuration space}
($\conf-$space) of the AM instrument. At a given {\it configuration} $\tau \in
\conf$, the MMN is hypothetically displaced to $\tau \mmf = \{\tau \bx ~|~ \bx
\in \mmf \}$, where $\tau \bx \in \R^\dimm$ stands for a rigid transformation of
$\bx \in \R^\dimm$. The OM is thus given by $f_\mathsf{OM}(\tau) :=
\mu^\dimm[\Omega_\asD \cap \tau \mmf]$.

For example, for $3-$axis machines with no rotational DOF (i.e., $\conf \cong
\R^3$), the configurations can be represented by $\dimm-$dimensional translation
vectors $\bt \in \R^3$. The displaced MMN is thus obtained as $(\mmf + \bt) =
\{\bx + \bt ~|~ \bx \in \mmf \}$. For high-axis machines with rotations, the
instrument configurations are elements of a different subgroup of the 6D Lie
group of rigid motions $\SE{3}$ \cite{Selig2005geometrical}, hence the OM is a
field defined over a non-Euclidean (Riemannian) manifold. For simplicity, we
present the mathematical formula for $\conf \cong \R^\dimm$ (translations only)
first and briefly discuss extensions, whenever possible, to $\conf = \SE{\dimm}
\cong \SO{3} \rtimes \R^\dimm$ (combined rotations and translations).

The OM field varies from $f_\mathsf{OM}(\tau) = 0$ (no overlap) to the total
measure of the MMN $f_\mathsf{OM}(\tau) = \mu^\dimm[\mmf]$ (full overlap), the
latter corresponding to the configurations (if any) at which the displaced MMN
is completely contained within the as-designed shape (i.e., $\tau \mmf \subseteq
\Omega_\asD$). The OM ratio (OMR) $f_\mathsf{OMR}(\tau) := f_\mathsf{OM}(\tau)/
\mu^\dimm[\mmf] \in [0, 1]$ thus defines a normalized continuous field over the
$\conf-$space. The idea is illustrated in Fig. \ref{fig_OMR2D} for a 2D example.

\begin{figure} [ht!]
	\centering
	\includegraphics[width=0.46\textwidth]{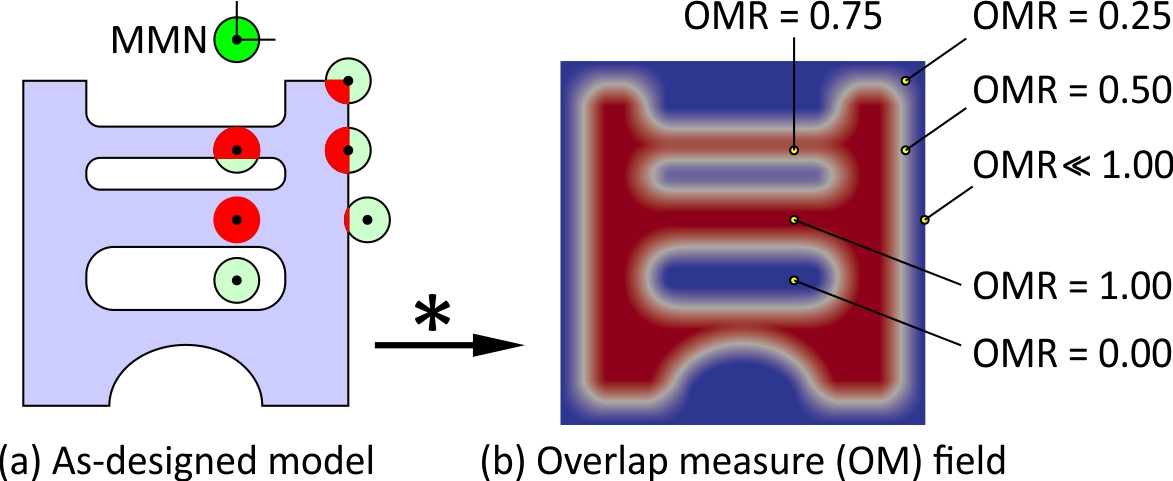}
	\caption{For 2D translations ($\conf \cong \R^2$) the OMR at different query
	points is computed by moving the MMN to the query point and computing the
	overlap $2-$measure (i.e., surface area) between the moved MMN and the
	as-designed slice.} \label{fig_OMR2D}
\end{figure}

Let $0 \leq \lambda < 1$ represent a parametrization that continuously connects
the two extremes. We define $T_\lambda \subseteq \conf$ as a superlevel set of
the OM field:
\begin{equation}
	\motion_\lambda := \big\{ \tau \in \conf ~|~ \mu^\dimm[\Omega_\asD \cap \tau
	\mmf] > \lambda \mu^\dimm[\mmf] \big\}, \label{eq_family}
\end{equation}
The members of the family of motions $\{\motion_\lambda\}_{0 \leq \lambda <
	1}$ are distinguished by the requirement that for every rigid transformation
$\tau \in T_\lambda$, the intersection measure between the stationary
as-designed shape and the displaced MMN is lower-bounded by a $\lambda-$fraction
of the maximum possible overlap (Fig. \ref{fig_OMR2D}). Hereafter, we refer to
this fraction as the OMR {\it threshold} (OMRT).

\begin{figure*}
	\centering
	\includegraphics[width=0.96\textwidth]{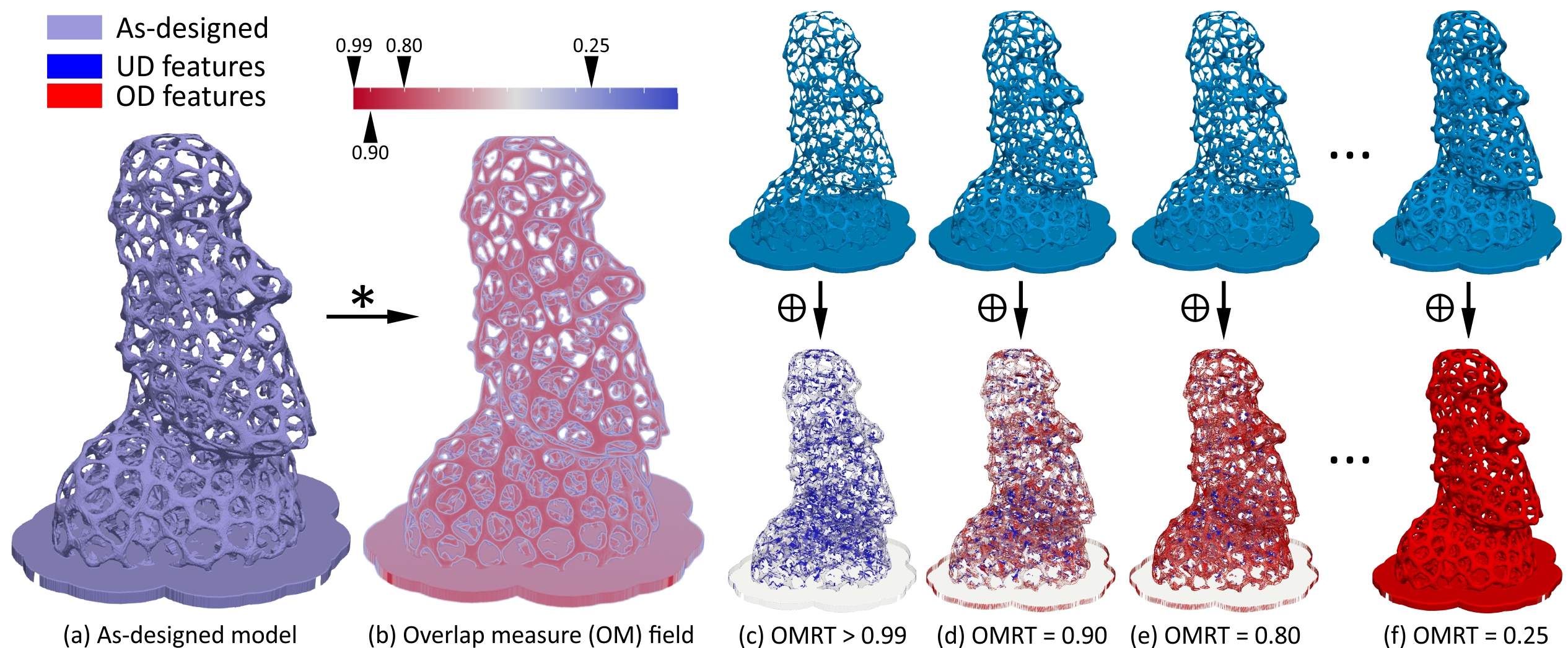}
	\caption{For a given as-designed model (a) in 3D (voxelized at $512
	\times 512 \times 512$) its OM field (b) with an MMN (diamond-shaped, with
	edge-length of $3\times 3 \times 3$) is computing as a convolution, using FFTs
	on the GPU. Different superlevel sets of the OM field (c--f) represent 3D
	translations of the AM instrument (top) for various deposition allowances
	specified by OMRT. The dilation of these motions with the MMN yield the
	as-manufactured alternatives (bottom). For each one, the UD/OD regions are
	color-coded by blue/red, respectively. The 3D model is obtained from
	\textsf{GrabCAD}.} \label{fig_OMR3D}
\end{figure*}

At every choice of the OMRT, all configurations whose OMR values exceed the
$\lambda-$fraction are included in the motion. For purely translational motions
(i.e., $\conf \cong \R^\dimm$), the near-extreme superlevel sets corresponding
to $\lambda \to 1^-, 0^+$ can be computed by Minkowski difference/sum,
respectively,%
\footnote{We are ignoring the technicalities with regard to regularization of
	superlevel sets. All equalities are equivalence up to regularization (i.e.,
	equality ``almost everywhere'' in measure-theoretic sense)
	\cite{Behandish2017analytic}.}
of the as-designed shape with the MMN's reflection with respect to the origin
\cite{Nelaturi2015manufacturability}:
\begin{align}
	T_{1^-} := \lim_{\lambda \to 1^-}\motion_{\lambda} &= (\Omega_\asD \ominus
	(-\mmf)), \label{eq_lambda_min} \\
	T_{0^+} := \lim_{\lambda \to 0^+}\motion_{\lambda} &= (\Omega_\asD \oplus
	(-\mmf)). \label{eq_lambda_max}
\end{align}
where $-B = \{ -\bx ~|~ \bx \in B \}$ denotes a reflection. For other values
$\lambda \in [0,1)$, $(\Omega_\asD \ominus (-\mmf)) \subseteq \motion_\lambda
\subseteq (\Omega_\asD \oplus (-\mmf))$, as depicted by Fig. \ref{fig_OMR2D}. In
other words, the one-parametric family of as-manufactured alternatives form a
totally ordered set (in terms of containment) bounded by the two extremes given
in \eq{eq_lambda_min} and \eq{eq_lambda_max}.

Hereafter, we frequently characterize sets with their {\it defining functions}.
A real-valued defining function implicitly describes a set as its $0-$superlevel
set (i.e., `support'): For example, $f_{T_\lambda} : \conf \to \R$ describes
$T_\lambda \subseteq \conf$ as:
\begin{equation}
	\supp(f_{T_\lambda}) := \big\{ \bt \in \conf ~|~ f_{T_\lambda}(\bt) > 0 \big\}.
	\label{eq_supp}
\end{equation}
where $f_{T_\lambda}(\tau)$ for a given $\tau \in \conf$ is defined by:
\begin{equation}
	f_{T_\lambda}(\tau) := \mu^\dimm[\Omega_\asD \cap \tau \mmf] - \lambda
	\mu^\dimm[\mmf], \label{eq_def_motion}
\end{equation}
whose substitution into \eq{eq_supp} yields the definition in \eq{eq_family}.
The indicator (i.e., characteristic) function $\indic_{T_\lambda} : \conf \to
\{0, 1\}$ is a special canonical form of a defining function:
\begin{equation}
	\indic_{T_\lambda}(\tau) := \left\{
	\begin{array}{ll}
		1 & \text{if}~ \mu^\dimm[\Omega_\asD \cap 
		\tau \mmf] > \lambda \mu^\dimm[\mmf], \\
		0 & \text{otherwise}.
	\end{array}
	\right. \label{eq_indic_motion}
\end{equation}
In general, $\indic_{T_\lambda}(\tau) = \sign (f_{T_\lambda} (\tau))$ where
$\sign(x) = 1$ for $x > 0$ and $\sign(x) = 0$ otherwise.

Next, we transform the above measures into operations on indicator functions of
the as-designed shape and MMN. Intersection measures can in general be computed
as inner products of indicator functions. If the pointsets are shifted by
different relative transformations $\bt \in \R^3$, the inner product for all
translations can be expressed as a single {\it convolution}
\cite{Lysenko2010group}. Hence, for translational motions (i.e., $\conf \cong
\R^\dimm$), the OM can be computed as:
\begin{equation}
	\mu^\dimm[\Omega_\asD \cap (\mmf + \bt)] 
	= (\indic_{\Omega_\asD} \ast \indic_{-\mmf})(\bt).
\end{equation}
The MMN measure is simply obtained as a $1-$norm of the indicator function
$\mu^\dimm[\mmf] = \| \indic_\mmf \|_1$. Hence, we rewrite \eq{eq_def_motion} in
{\it exact} analytical form (i.e., in terms of functions):
\begin{equation}
	f_{T_\lambda}(\bt) = (\indic_{\Omega_\asD} \ast \indic_{-\mmf})(\bt) - \lambda
	\|\indic_\mmf\|_1, \label{eq_def_motion_conv}
\end{equation}
The main advantage of this formulation is its computational efficiency.
Convolutions can be computed rapidly via fast Fourier transforms (FFT)
\cite{Kavraki1995computation}. For a grid of size $O(n)$ per dimension, this
takes $O(n^\dimm \log n)$, which is very close to linear time in the number of
pixels/voxels $O(n^\dimm)$ as opposed to $O(n^{2\dimm})$ of the na\"{i}ve
approach. When rotations are involved, the above formula holds for every fixed
rotation in the $\conf-$space \cite{Lysenko2010group}, hence the computation is
repeated for a sampling of rotations.

For general rigid motions, the above notions are subsumed by Minkowski
products/quotients and group convolutions \cite{Lysenko2010group}. The general
formula is slightly different due to compositions with lifting/projection maps
between the Euclidean $\dimm-$space and the Lie group $\conf = \SE{\dimm} \cong
\SO{\dimm} \rtimes \R^\dimm$. The general idea is nevertheless applicable.

\parag{Computing As-Manufactured Shapes}

For each set of AM instrument configurations $T_\lambda \subseteq \conf$, the
resulting as-manufactured shape $\Omega_{\asM, \lambda} \subseteq \R^\dimm$ is
obtained by sweeping the MMN along $T_\lambda$. For purely translational motions
(i.e., $\conf \cong \R^\dimm$) the sweep reduces to a Minkowski sum:
\begin{equation}
	\Omega_{\asM, \lambda} := \sweep(T_\lambda, B) \overset{\conf \cong
	\R^\dimm}{=\joinrel=\joinrel=} (T_\lambda \oplus B), \label{eq_sweep}
\end{equation}
which extends to Minkowski product for general rigid motion. The defining
function of Minkowski sum is also computable as a convolution of indicator
functions \cite{Lysenko2010group}:
\begin{equation}
	f_{\Omega_{\asM, \lambda}}(\bx) = f_{T_\lambda \oplus \mmf}(\bx) =
	(\indic_{T_\lambda} \ast \indic_{\mmf})(\bx), \label{eq_Mink_conv}
\end{equation}
where the indicator function of $T_\lambda$ is obtained from
\eq{eq_indic_motion}. The as-manufactured shape is obtained as the support of
the convolution $\Omega_\asM = \supp(\indic_{T_\lambda} \ast \indic_{\mmf})$.
Once again, the convolution takes $O(n^\dimm \log n)$ using FFTs, which is only
a logarithmic factor away from linear time.

In summary, a one-parametric family of AM instrument motions are computed (for
$\conf \cong \R^\dimm$) as:
\begin{equation}
	~\motion_{\lambda} = \big\{ \bt \in \conf ~|~ (\indic_{\Omega_\asD} \ast
	\indic_{-\mmf})(\bt) - \lambda \|\indic_\mmf\|_1 > 0\big\}.
	\label{eq_family_motion}
\end{equation}
Every $T_\lambda$ is distinguished by its {\it uniform} OMRT value of $\lambda
\in [0, 1)$ throughout the $\conf-$space. The as-manufactured shape of these
motions are then computed as:
\begin{equation}
	\Omega_{\asM, \lambda} = \big\{ \bx \in \R^\dimm ~|~ ((\sign \circ
	f_{T_\lambda}) \ast \indic_{\mmf})(\bx) > 0 \big\}. \label{eq_family_shape}
\end{equation}

\subsection{Modeling Local Geometric Modifications} \label{sec_model_local}

The one-parametric family of as-manufactured alternatives form a 
totally ordered set (in terms of containment) bounded by the two extremes; 
namely,
\begin{enumerate}
	\item minimized under-deposition (UD$^-$) ($\lambda \to 1^-$) in which the 
resulting as-manufactured shape is the unique maximal (in terms of containment) 
manufacturable shape contained within the as-designed shape; and
	\item conservative over-deposition (OD$^+$) ($\lambda \to 0^+$) in which the 
resulting as-manufactured shape is a generalized offset of the as-designed 
shape with the MMN, and contains the MMN with a conservative margin.
\end{enumerate}
For translational motions, the UD$^-$ shape is a morphological opening (i.e.,
dilation of erosion) while the OD$^+$ shape is a double-offset (i.e., dilation of
dilation), obtained by applying \eq{eq_sweep} to \eq{eq_lambda_min} and
\eq{eq_lambda_max}, respectively:
\begin{align}
	\Omega_{\asM, 1^-} := \lim_{\lambda \to 1^-} \Omega_{\asM, \lambda} &=
	(\Omega_\asD \ominus (-\mmf)) \oplus \mmf, \label{eq_lambda_min_asM} \\
	\Omega_{\asM, 0^+} := \lim_{\lambda \to 0^+} \Omega_{\asM, \lambda}  &=
	(\Omega_\asD \oplus (-\mmf)) \oplus \mmf. \label{eq_lambda_max_asM}
\end{align}
Figures \ref{fig_OMR2D} and \ref{fig_OMR3D} illustrate the extreme cases as
well as several possibilities in between them, in 2D and 3D, respectively.
Decreasing the OMR leads to uniform global thickening of $\Omega_{\asM,
	\lambda}$ that depends on the local geometry of $\Omega_\asD$. If the MMN is
small, all $\Omega_{\asM, \lambda}$ alternatives have small geometric deviations
from $\Omega_\asD$. Nevertheless, they can have dramatically different
topological properties (Fig. \ref{fig_OMR3D}) as we discuss in Section
\ref{sec_UDOD}.

The choice of OMRT depends on trade-offs between functional requirements. If
over-deposition is strictly prohibited, $\Omega_{\asM, 1^-}$ is an obvious
choice if the goal is to minimize geometric difference. However, there will be
topological consequences due to missing UD features that cause local
disconnections (Fig. \ref{fig_OMR3D} (c)). If there is some allowance for OD, it
is possible to bring those features back by decreasing $\lambda$ (Fig.
\ref{fig_OMR3D} (d--f)). However, the resulting {\it uniform} growth of the
as-manufactured shape can cause topological errors elsewhere in the shape, such
as covering a tunnel/cavity. The geometric allowance may not be uniform either,
as functional surfaces may not be grown as much as aesthetic ones. It is
possible to extend the above model to accommodate adaptive, non-uniform, and
local changes to the as-manufactured shape.

To enable more design freedom, a {\it non-uniform} OMRT can be defined as a
field $\lambda^\ast: \conf \to \R$ and parameterized using a set of basis
functions:
\begin{equation}
	\lambda^\ast(\tau) := \sum_{0 \leq j < m} \lambda_j \phi_j(\tau),
	\label{eq_lambda_field}
\end{equation}
where $\phi_j : \conf \to \R$ can be any basis that provides flexible local
adjustments (e.g., radial basis functions or splines). The coefficients
$\lambda_0, \lambda_1, \ldots, \lambda_m \in [0, 1)$ can be adjusted (e.g.,
using gradient-descent optimization) based on the feedback from geometric,
topological, or physical analysis of the as-manufactured model. Without loss of
generality, we let $\phi_0(\tau) := 1$ so that the extended model subsumes the
uniform OMRT model as a special case when $m := 1$. The extended model is
obtained by replacing $\lambda$ with $\lambda^\ast(\tau)$ in the defining
functions of Section \ref{sec_family}.

Since topological analysis is the main focus of this paper, we postpone a more
detailed discussion of as-manufactured shape modeling to future work.

\section{Comparative Topological Analysis} \label{sec_UDOD}

We present a novel approach to characterize the differences in basic topological
properties of an arbitrary as-designed shape $\Omega_\asD$ and an
as-manufactured shape $\Omega_\asM$. The method works for comparative
topological analysis (CTA) of {\it arbitrary} r-sets $\Omega_\asD, \Omega_\asM
\subseteq \R^\dimm$, representing the as-designed and (one possible)
as-manufactured shape. Note that the latter need {\it not} be one of the
parametric alternatives $\Omega_{\asM, \lambda}$ presented in Section
\ref{sec_model}. Our TCA can be applied for an arbitrary the transformation
$\Omega_\asD \mapsto \Omega_\asM$ (Fig. \ref{fig_manifolds}) that preserves
solidity and manifoldness \cite{Requicha1980representations}.

\begin{figure} [ht!]
	\centering
	\includegraphics[width=0.46\textwidth]{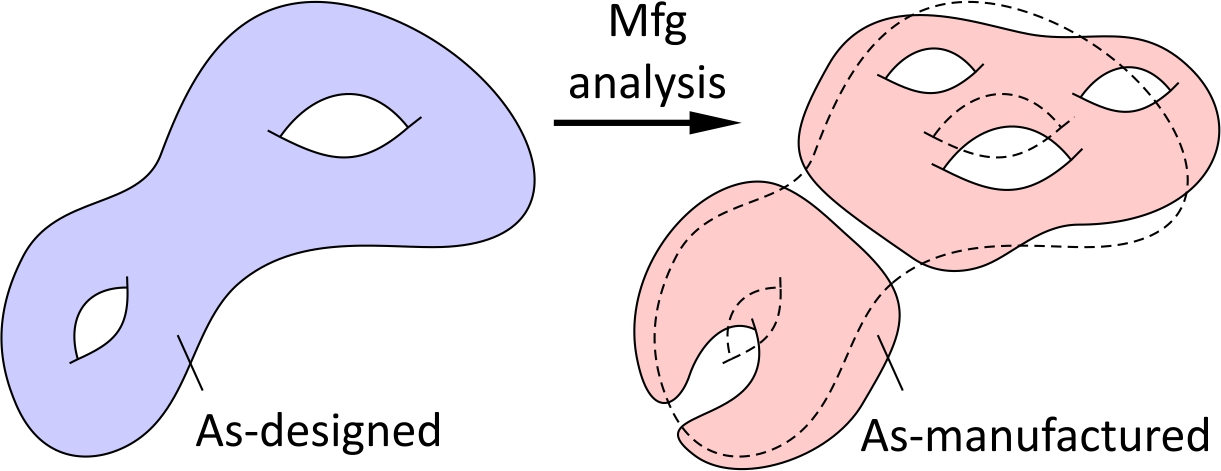}
	\caption{TCA applies to arbitrary changes from as-designed to as-manufactured
	solids. The only assumption in this section is solidity, i.e., both pointsets
	are compact-regular and semianalytic \cite{Requicha1980representations}.}
	\label{fig_manifolds}
\end{figure}

We quantify the AM-induced discrepancies in terms of the Euler characteristic
(EC) $\ec[\Omega]$ and the Betti numbers (BN) $\beti_j[\Omega]$ ($0 \leq i <
\dimm$). The two are related in $\dimm-$space by a simple linear combination
\cite{Hatcher2001algebraic}:
\begin{equation}
	\ec[\Omega] = \sum_{0 \leq j \leq \dimm} (-1)^j \beti_j[\Omega], \quad\Omega
	\subseteq \R^\dimm. \label{eq_EC_Betti}
\end{equation}
In 2D, $\beti_0[\cdot]$ and $\beta_1[\cdot]$ are the numbers of connected
components and holes, respectively. In 3D, $\beti_0[\cdot]$, $\beta_1[\cdot]$,
and $\beti_2[\cdot]$ are the numbers of connected components, tunnels, and
cavities, respectively. Intuitively, each $\beta_j[\cdot]$ counts the number of
qualitatively different $j-$dimensional closed surfaces contained within the
$\dimm-$dimensional shape, i.e., surfaces that cannot be transformed to one
another via continuous deformations without exiting the shape. Both EC and BN
are topological properties in the sense that they are invariant under continuous
deformations of the shape itself, which does not apply to the AM process in
general (viewed as a transformation $\Omega_\asD \mapsto \Omega_\asM$).

\subsection{Identifying Local Contributions}

Examining global topological changes by comparing the values of
$\ec[\Omega_\asD]$ (resp. $\beti_j[\Omega_\asD]$) with $\ec[\Omega_\asM]$ (resp.
$\beti_j[\Omega_\asM]$) provides little insight into what and how the different
geometric features contribute to the discrepancies. Our goal is to quantify
local topological discrepancies and identify what features are responsible for
them {\it with precise spatial information} that can be used for systematic
design and process correction. Figures \ref{fig_global_1}, \ref{fig_global_2},
and \ref{fig_global_3} illustrate a few possible scenarios in 2D when the global
analysis is insufficient (e.g., overall EC and/or BNs remain unchanged).

\begin{figure} [ht!]
	\centering
	\includegraphics[width=0.46\textwidth]{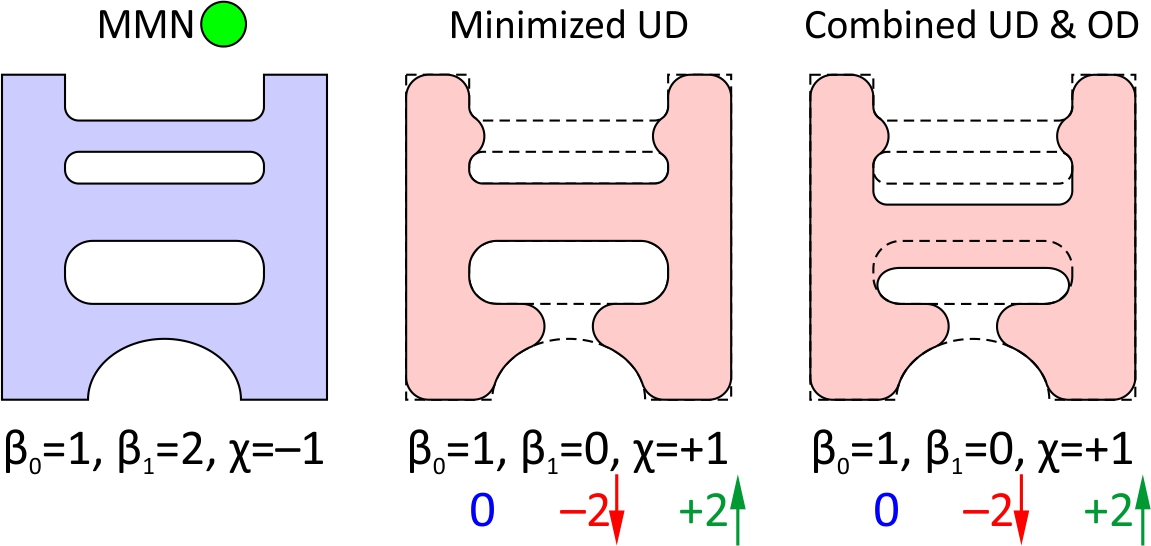}
	\vspace{-1.em}
	\caption{A 2D slice printed with the ``minimized UD'' policy
	\eq{eq_lambda_min_asM}. The broken two beams do not contribute to the total
	number of connected components, because the middle connection remains intact.}
	\label{fig_global_1}
	\vspace{1.0em}
	\centering
	\includegraphics[width=0.46\textwidth]{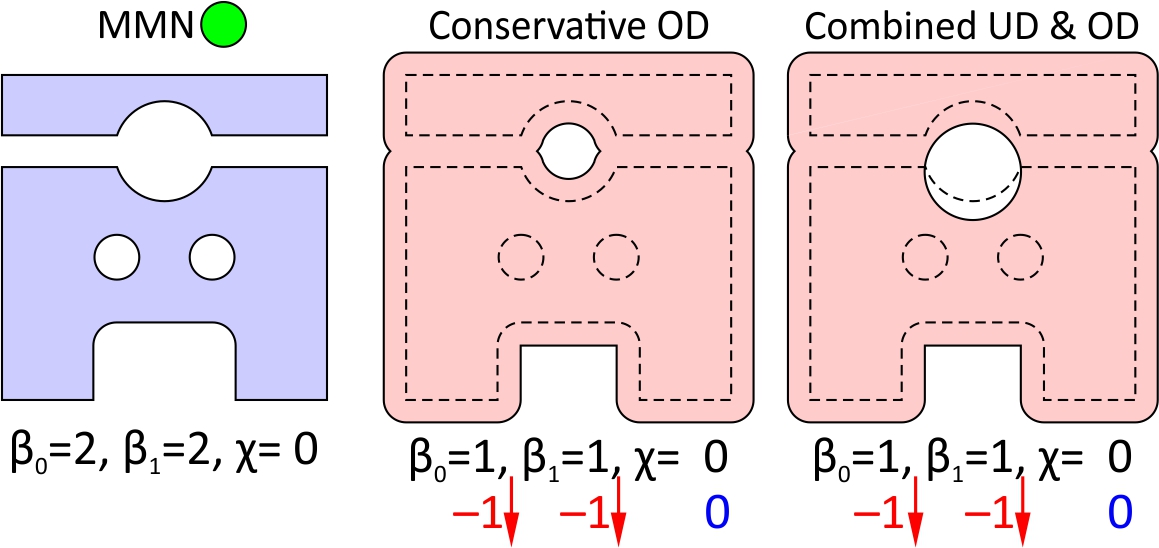}
	\vspace{-1.em}
	\caption{A 2D slice printed with the ``conservative OD'' policy
	\eq{eq_lambda_max_asM}. The two separate components are merged to one, the two
	holes are covered, and another one is formed, resulting in an unchanged EC.}
	\label{fig_global_2}
	\vspace{1.0em}
	\centering
	\includegraphics[width=0.46\textwidth]{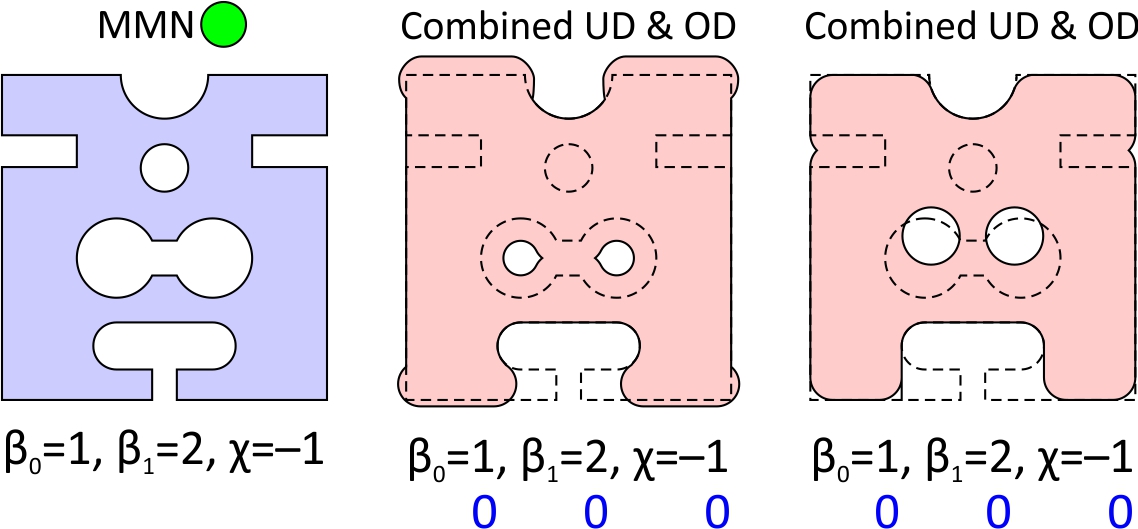}
	\vspace{-1.em}
	\caption{A more complex deposition policy than Figs. \ref{fig_global_1} and
	\ref{fig_global_2}. The connectivity does not change; one hole is covered while
	another is split into two, keeping both BNs and EC unchanged.}
\label{fig_global_3} \end{figure}

\parag{UD/OD Deviation ``Features''}

Let $C := (\Omega_\asD \cap^\ast \Omega_\asM)$ be the region of space that is
common between as-designed and as-manufactured shapes. The set differences
between these shapes amount to the under- and over-deposition (UD/OD) regions,
denoted respectively by $U, O \subseteq \R^\dimm$:
\begin{equation}
	U := (\Omega_\asD -^\ast \Omega_\asM), \quad\text{and}\quad O := (\Omega_\asM
	-^\ast \Omega_\asD). \label{eq_U_O}
\end{equation}
we need to carefully distinguishing regularized set operations (denoted via
asterisk) from ordinary set operations. Here, $\cap^\ast, -^\ast$ stand for
regularized set intersection and set difference \cite{tilove1980closure} which
ensure the resulting set is regularized (i.e., ``dangling'' surfaces/edges are
removed).%
\footnote{We do not need to worry about the set union operation, because the
	union of closed-regular sets is always closed-regular
	\cite{Requicha1980representations}. Hence, regularized set union $\cup^\ast$ can
	be replaced with ordinary set union $\cup$ when both participants are
	closed-regular sets (e.g., r-sets).}
The as-designed and as-manufactured shapes can be related set-theoretically as
follows:
\begin{align}
    \Omega_\asD &= (\Omega_\asD \cap^\ast \Omega_\asM) \cup (\Omega_\asD -^\ast
	\Omega_\asM) = (C \cup U), \label{eq_DM} \\
	\Omega_\asM &= (\Omega_\asM \cap^\ast \Omega_\asD) \cup (\Omega_\asM -^\ast
	\Omega_\asD) = (C \cup O), \label{eq_MD}
\end{align}
If the as-manufactured shape is reasonably close to the as-designed shape, the
OD/UD pointsets are made of small and scattered pieces, as we illustrate with
examples. Let us decompose the OD/UD pointsets into their respective (disjoint)
connected components (CC):
\begin{align}
	U &:= \bigcup_{1 \leq i \leq n_U} U_i, \quad (U_i \cap U_j) = \emptyset
	~~\text{if}~~ i \neq j, \label{eq_CC_U} \\
	O &:= \bigcup_{1 \leq i \leq n_O} O_i, \quad (O_i \cap O_j) = \emptyset
	~~\text{if}~~ i \neq j, \label{eq_CC_O}
\end{align}
Note that the CCs do not even touch along boundaries, hence their ordinary
pairwise intersections are empty.

\parag{Local Topology of the Features}

EC is an {\it additive} property, i.e., for every pair of (potentially
intersecting) pointsets $A, B$ with well-defined ECs, we have:
\begin{equation}
	\ec[A \cup B] = \ec[A] + \ec[B] - \ec[A \cap B], \label{eq_add}
\end{equation}
Note that even if the two shapes are closed-regular, the intersection $(A \cap
B)$ is not regularized, and can have lower-dimensional regions such as surface
patches, curve segments, and points. As long as the sets are semianalytic, these
features are well-behaved with computable ECs. Applying \eq{eq_add} to both
sides of \eq{eq_DM} and \eq{eq_MD} yields:
\begin{align}
	\ec[\Omega_\asD] &= \ec[C] + \ec[U] - \ec[C \cap U], \label{eq_DM_EC} \\
	\ec[\Omega_\asM] &= \ec[C] +  \ec[O] - \ec[C \cap O], \label{eq_MD_EC}
\end{align}
Subtracting \eq{eq_MD_EC} from \eq{eq_DM_EC} eliminates the EC $\ec[C]$ of the
common region $C = (\Omega_\asD \cap^\ast \Omega_\asM)$ and yields:
\begin{align}
	\ec[\Omega_\asM] - \ec[\Omega_\asD] &= \Big( \ec[O] - \ec[C \cap O] \Big)
	\nonumber \\
	&- \Big( \ec[U] - \ec[C \cap U] \Big), \label{eq_add_EC}
\end{align}
The terms $\ec[U]$ and $\ec[O]$ can be expanded further (using \eq{eq_add}) in
terms of the CCs of $U$ and $O$ in \eq{eq_CC_U} and \eq{eq_CC_O}:
\begin{align}
	\ec[U] &= \sum_{1 \leq i \leq n_U} \ec[U_i], \label{eq_U_EC} \\
	\ec[O] &= \sum_{1 \leq i \leq n_O} \ec[O_i], \label{eq_O_EC}
\end{align}
noting that the components are mutually exclusive, i.e., $(U_i \cap U_j) = (O_i
\cap O_j) = \emptyset$ hence $\ec[U_i \cap U_j] = \ec[O_i \cap O_j] =
\ec[\emptyset] = 0$ for every pair of indices $i \neq j$.

The remaining terms in \eq{eq_add_EC} are the ECs of the pointsets $(C \cap U)$
and $(C \cap O)$. These pointsets consist of lower-dimensional regions over
which the UD/OD regions intersect the common region $C = (\Omega^\ast_\asD
\cap^\ast \Omega^\ast_\asM)$. They comprise {\it portions} of the boundaries
$\partial \Omega_\asM$ and $\partial \Omega_\asD$, as illustrated in Fig.
\ref{fig_UDOD}. For r-sets, which are assumed to be semianalytic in solid
modeling \cite{Requicha1980representations}, these intersection terms can be
stratified into a combination of $2-$, $1-$, and $0-$strata, i.e., surface
patches, curve segments, or points, respectively.

\begin{figure} [ht!]
	\centering
	\includegraphics[width=0.48\textwidth]{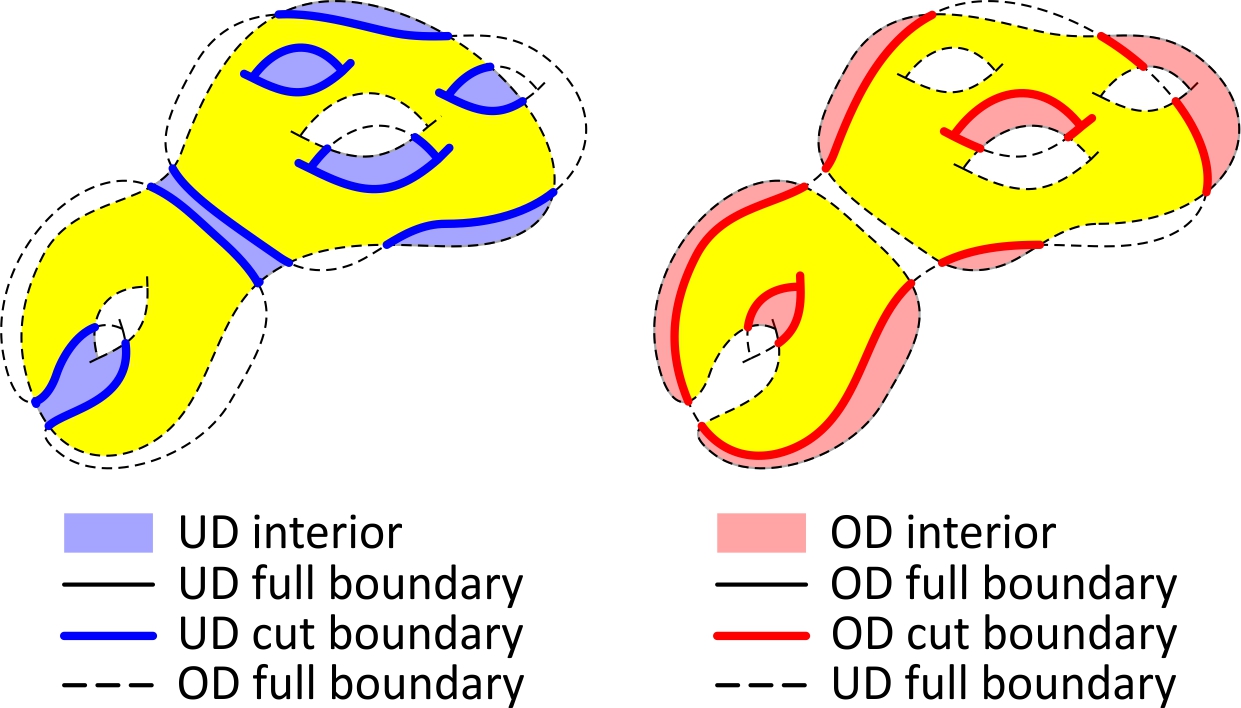}
	\caption{For arbitrary changes in the global topology of the disturbed shape,
	we identify the deviations in terms of contributions of local UD/OD features.
	In 3D, the interiors are volumetric and boundaries can be any combinations of
	surface patches, curve segments, and points, since the intersecting shapes are
	semianalytic.} \label{fig_UDOD}
\end{figure}

\begin{figure*}
	\centering
	\includegraphics[width=0.96\textwidth]{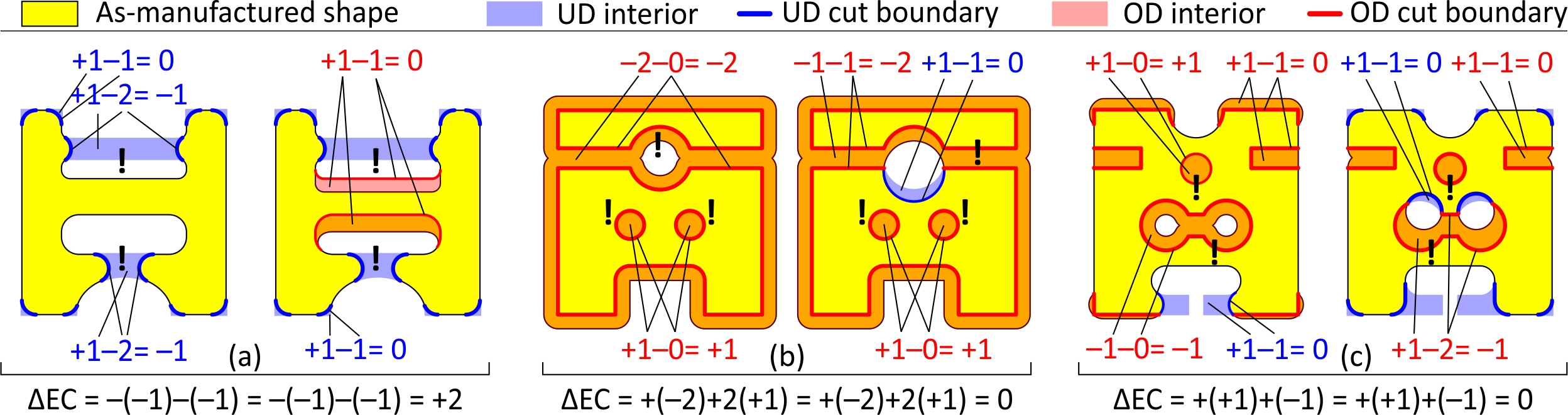}
	\caption{The three pairs of as-manufactured slices presented in Figs.
	\ref{fig_global_1}, \ref{fig_global_2}, and \ref{fig_global_3} are revisited
	using CTA. The contributions of each non-simple UD/OD feature (EC of its
	interior minus EC of its cut boundary) are shown. The total EC is obtained from
	accumulating these contributions with minus/plus signs for UD/OD, respectively.
	Even when the total sums to zero, CTA reveals the contributions of non-simple
	feature, so that they can be separated and corrected. {\it This information was
		lost in the global analysis} of Figs. \ref{fig_global_1}, \ref{fig_global_2},
	and \ref{fig_global_3}.} \label{fig_analysis}
\end{figure*}

\begin{lemma} (Euler Characteristic of Cut Boundaries)
	\begin{align}
		\ec[C \cap U] &= \sum_{1 \leq i \leq n_U}  \ec[C \cap \partial U_i],
		\label{eq_Ucutb_EC} \\
		\ec[C \cap O] &= \sum_{1 \leq i \leq n_O} \ec[C \cap \partial O_i],
		\label{eq_Ocutb_EC}
	\end{align}
\end{lemma}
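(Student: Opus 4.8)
The plan is to prove the two identities in parallel, since the $O$-case follows from the $U$-case by swapping the roles of $\Omega_\asD$ and $\Omega_\asM$; I therefore concentrate on \eq{eq_Ucutb_EC}. The argument has two ingredients: first reduce the left-hand side to a sum over connected components via additivity of the EC, and then show componentwise that intersecting $C$ with the whole component $U_i$ coincides with intersecting $C$ only with its boundary $\partial U_i$.

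First I would invoke the decomposition \eq{eq_CC_U}. Because the components $U_i$ are pairwise disjoint---indeed they do not even touch along their boundaries---the sets $(C \cap U_i)$ are pairwise disjoint as well, so every pairwise (and higher-order) intersection is empty and contributes $\ec[\emptyset] = 0$. Repeated application of the additivity formula \eq{eq_add} therefore collapses to
\[
    \ec[C \cap U] = \ec\Big[ \textstyle\bigcup_{1 \leq i \leq n_U} (C \cap U_i) \Big] = \sum_{1 \leq i \leq n_U} \ec[C \cap U_i].
\]

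The heart of the matter is the set identity $(C \cap U_i) = (C \cap \partial U_i)$ for each $i$. The inclusion $\supseteq$ is immediate, since $U_i$ is compact and hence $\partial U_i \subseteq U_i$. For the reverse inclusion I would exploit that $C = (\Omega_\asD \cap^\ast \Omega_\asM)$ and $U = (\Omega_\asD -^\ast \Omega_\asM)$ are built from \emph{regularized} operations, so that $\interior(C)$ and $\interior(U)$ are disjoint: an interior point of $C$ lies in the interior of $\Omega_\asM$, whereas an interior point of $U$ lies in the exterior of $\Omega_\asM$. Consequently no point can belong to $C$ and to $\interior(U_i) \subseteq \interior(U)$ at once, so any $x \in (C \cap U_i)$ must lie on $\partial U_i$. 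Combining the two ingredients gives \eq{eq_Ucutb_EC}, and the symmetric computation---now with $C \subseteq \Omega_\asD$ and $O$ exterior to $\Omega_\asD$---yields \eq{eq_Ocutb_EC}.

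The step I expect to demand the most care is the disjoint-interior claim underpinning $(C \cap U_i) = (C \cap \partial U_i)$. It hinges entirely on regularization: without the asterisks, $C$ and $U$ could share full-dimensional material along the cut, and the identity would break down. I would make this precise by recalling that for r-sets the interior of a regularized intersection (resp. difference) agrees with the ordinary interior of the ordinary intersection (resp. difference), whence $\interior(C)$ and $\interior(U)$ partition $\interior(\Omega_\asD) \setminus \partial\Omega_\asM$ into the portions interior and exterior to $\Omega_\asM$, which are manifestly disjoint. A secondary technicality is that each $C \cap \partial U_i$ is semianalytic---a finite union of $2$-, $1$-, and $0$-strata---so that its EC is well defined and the additivity used above is legitimate; this is exactly the semianalyticity already invoked in the discussion preceding the lemma.
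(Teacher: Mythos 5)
Your proposal is correct and is essentially the paper's own argument: both proofs combine additivity of the EC over the pairwise-disjoint connected components with the cut-boundary identity $(C \cap U_i) = (C \cap \partial U_i)$, which the paper records as \eq{eq_cut_id} for any two solids with empty regularized intersection. The only differences are cosmetic---you decompose into components before passing to boundaries whereas the paper passes to $\partial U$ first and then decomposes, and you sketch a point-set justification of the identity that the paper merely asserts; in that sketch, note that concluding $C \cap \interior(U_i) = \emptyset$ from disjointness of $\interior(C)$ and $\interior(U)$ needs one further appeal to regularity (every point of $C$ is a limit of points of $\interior(C)$, so a point of $C$ lying in the open set $\interior(U_i)$ would force $\interior(C) \cap \interior(U_i) \neq \emptyset$).
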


\begin{proof}
	For every pair of solids $A$ and $B$ whose regularized set intersection is
	empty $(A \cap^\ast B) = \emptyset$, their ordinary set intersection can only
	occur over their common portion of boundaries, leading to the following
	identities:%
	\footnote{In measure-theoretic terms, $(A \cap^\ast B) = \emptyset
	~\rightleftharpoons~ \mu^\dimm[A \cap B] = 0$ which, in turn, can be expressed
	as an inner product $\langle \indic_A, \indic_B \rangle = 0$.}
	\begin{equation}
		(A \cap B) = (A \cap \partial B) = (\partial A \cap B) = (\partial A \cap
		\partial B), \label{eq_cut_id}
	\end{equation}
	which may or may not be empty. We refer to this pointset as the {\it cut
	boundary} of B with A (or vice versa) and denote it by $\partial_A B =
	\partial_B A$.
	
	The above identity is always true if we let $A := C$ and $B := U$ or $O$,
	noting that $(C \cap^\ast U) = (C \cap^\ast O) = \emptyset$ as a result of the
	definition in \eq{eq_U_O}. Hence:
	\begin{align}
		(C \cap U) &= (C \cap \partial U) = C \cap \partial \Big( \bigcup_{1 \leq i
		\leq n_U} U_i \Big), \\
		(C \cap O) &= (C \cap \partial O) = C \cap \partial \Big( \bigcup_{1 \leq i
		\leq n_O} O_i \Big).
	\end{align}
	Noting that the boundary of the union of disjoint CCs is the same as the union
	of their boundaries, we obtain:
	\begin{align}
		(C \cap U) &= C \cap \Big( \bigcup_{1 \leq i \leq n_U} \partial U_i \Big), \\
		(C \cap O) &= C \cap \Big( \bigcup_{1 \leq i \leq n_O} \partial O_i \Big).
	\end{align}
	Using distributivity of intersection over unions, we obtain:
	\begin{align}
		(C \cap U) &= \bigcup_{1 \leq i \leq n_U} (C \cap \partial U_i), \\
		(C \cap O) &= \bigcup_{1 \leq i \leq n_O} (C \cap \partial O_i),
	\end{align}
	Note that \eq{eq_cut_id} holds for $A := C$ and $B := U_i$ or $O_i$ as well,
	since $(C \cap^\ast U_i) = (C \cap^\ast O_i) = \emptyset$ for all $1 \leq i
	\leq n_U$ or $n_O$, respectively. Hence:
	\begin{align}
		(C \cap U) &= \bigcup_{1 \leq i \leq n_U} \partial_{C} U_i, \\
		(C \cap O) &= \bigcup_{1 \leq i \leq n_O} \partial_{C} O_i,
	\end{align}
	Applying \eq{eq_add} to both sides of the above equations and noting that the
	cut boundaries $\partial_{C} U_i$ and $\partial_{C} O_i$ are all mutually
	disjoint, we obtain:
	\begin{align}
		\ec[C \cap U] &= \sum_{1 \leq i \leq n_U}  \ec[\partial_{C} U_i],
		\label{eq_Ucutb_EC_} \\
		\ec[C \cap O] &= \sum_{1 \leq i \leq n_O} \ec[\partial_{C} O_i],
		\label{eq_Ocutb_EC_}
	\end{align}
	which is the same as \eq{eq_Ucutb_EC} and \eq{eq_Ocutb_EC}.
\end{proof}

The following key result follows from substituting the ECs from \eq{eq_U_EC},
\eq{eq_O_EC}, \eq{eq_Ucutb_EC}, and \eq{eq_Ocutb_EC} into \eq{eq_add_EC}:

\begin{coro} (Local Contributions to Global EC)
	\begin{align}
	\ec[\Omega_\asM] - \ec[\Omega_\asD] &= \sum_{1 \leq i \leq n_U} \Big( 
	\ec[O_i] - \ec[\partial_{C} O_i] \Big) \nonumber \\
	&-\sum_{1 \leq i \leq n_O} \Big( 
	\ec[U_i] - \ec[\partial_{C} U_i] \Big). 
	\label{eq_main}
	\end{align}
\end{coro}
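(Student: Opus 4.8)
The plan is to treat this corollary as a purely algebraic assembly of identities already in hand, rather than as a fresh topological argument: all of the genuine content has been discharged in the additivity relation \eq{eq_add}, in the component decompositions, and in the preceding Lemma, so what remains is a controlled substitution. First I would recall the master balance \eq{eq_add_EC}, namely that $\ec[\Omega_\asM] - \ec[\Omega_\asD]$ equals $(\ec[O] - \ec[C \cap O])$ minus $(\ec[U] - \ec[C \cap U])$. This relation was itself obtained by applying additivity \eq{eq_add} to the decompositions $\Omega_\asD = (C \cup U)$ in \eq{eq_DM} and $\Omega_\asM = (C \cup O)$ in \eq{eq_MD}, giving \eq{eq_DM_EC} and \eq{eq_MD_EC}, and subtracting so that the common term $\ec[C]$ cancels. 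Every quantity appearing here is a well-defined Euler characteristic because the participating pointsets are semianalytic, so the common region, the deviation regions, and their (possibly lower-dimensional) intersections all stratify into $0$-, $1$-, and $2$-strata with computable ECs.

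Next I would substitute the four component-wise expansions that the text singles out. For the bulk terms I would use the disjoint-component additivity \eq{eq_U_EC} and \eq{eq_O_EC}, replacing $\ec[U]$ by $\sum_i \ec[U_i]$ and $\ec[O]$ by $\sum_i \ec[O_i]$; these hold precisely because the connected components are mutually non-touching, so each pairwise ordinary intersection is empty and contributes $\ec[\emptyset]=0$ in \eq{eq_add}. For the interface terms I would invoke the preceding Lemma, equations \eq{eq_Ucutb_EC} and \eq{eq_Ocutb_EC}, to rewrite $\ec[C \cap U]$ as $\sum_i \ec[C \cap \partial U_i]$ and $\ec[C \cap O]$ as $\sum_i \ec[C \cap \partial O_i]$, identifying each $C \cap \partial U_i$ with the cut boundary $\partial_{C} U_i$ introduced in that proof (and likewise for $O_i$). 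Finally I would use linearity of the finite sums to pair each bulk contribution with its matching cut-boundary contribution and group the outcome component by component, carrying the $+/-$ signs inherited from \eq{eq_add_EC}, so that over-deposition features enter as $\sum_i (\ec[O_i] - \ec[\partial_{C} O_i])$ and under-deposition features as $-\sum_i (\ec[U_i] - \ec[\partial_{C} U_i])$, which is exactly \eq{eq_main}.

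The step I expect to require the most care is not any individual derivation but the bookkeeping. One must match each summation range to its proper component count (the $O$-sum running to $n_O$ and the $U$-sum to $n_U$), keep the sign convention of \eq{eq_add_EC} consistent so that UD contributions retain the minus sign, and confirm that the identification $\ec[C \cap \partial U_i] = \ec[\partial_{C} U_i]$ is legitimate for every component. Because the Lemma already certifies that the cut boundaries $\partial_{C} U_i$ and $\partial_{C} O_i$ are mutually disjoint, and because the components $U_i$, $O_i$ are themselves disjoint, no cross terms survive the regrouping and the assembly closes with no residual intersection corrections. The only substantive subtlety, the well-definedness and additivity of $\ec[\cdot]$ on the non-regularized, lower-dimensional intersection sets, has been secured upstream by the semianalyticity hypothesis, so I do not expect it to reappear here.
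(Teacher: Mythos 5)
Your proposal is correct and follows essentially the same route as the paper: the corollary is obtained by substituting the component expansions \eq{eq_U_EC}, \eq{eq_O_EC} and the Lemma's cut-boundary identities \eq{eq_Ucutb_EC}, \eq{eq_Ocutb_EC} into the master balance \eq{eq_add_EC} and regrouping term by term. Your bookkeeping remark is also well taken, since the summation indices in the paper's displayed statement of \eq{eq_main} are in fact swapped (the $O_i$-sum should run to $n_O$ and the $U_i$-sum to $n_U$), which your proof gets right.
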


The above result motives the following definition of a deviation ``feature'' for
AM processes:

\begin{defn} (Deviation Features)
A UD/OD `deviation feature' as a pair $\mathbb{F} := (F, \partial_C F)$, where
$F \subseteq \R^\dimm$ is its solid component and $\partial_C F = (F \cap
\partial C)$ is its cut boundary with another solid $C \subseteq \R^\dimm$ such
that $(C \cap^\ast F) = \emptyset$.

We can also define the notion of EC {\it contribution} (ECC) for features in
order to simplify \eq{eq_main}:
\begin{equation}
	\ecc[\mathbb{F}; C] := \ec[F] - \ec[\partial_C F] = \ec[F] - \ec[F \cap
	\partial C], \label{eq_def_EC_feat}
\end{equation}
The feature is called `simple' with respect to a set $C$ such that $(C \cap^\ast
F) = \emptyset$ (or $C-$simple for short) if $\ecc[\mathbb{F}; C] = 0$.
\end{defn}

To summarize, we proved that {\it the total change of EC due to AM errors can be
	obtained from a sum of contributions of ECs of individual UD/OD features}:
\begin{align}
	\ec[\Omega_\asM] - \ec[\Omega_\asD] = \sum_{1 \leq i \leq n_F} \pm
	\ecc[\mathbb{F}_i; C]. \label{eq_main_simple}
\end{align}
where $n_F = n_U + n_O$ is the total number of features. The sign $\pm$ depends
on the UD/OD type of each feature.

\begin{itemize}
	\item UD features are $\mathbb{U}_i := (U_i, \partial_{C}U_i)$ where $U_i$
	stands for the $i$\th CC of $(\Omega_\asD -^\ast \Omega_\asM)$ and
	$\partial_{C}U_i = (C \cap \partial U_i)$ is its cut boundary with $C =
	(\Omega_\asD \cap^\ast \Omega_\asM)$. They contribute $-\ecc[\mathbb{U}_i; C]$
	to the total change of EC.
	\item OD features are $\mathbb{O}_i := (O_i, \partial_{C}O_i)$ where $O_i$
	stands for the $i$\th CC of $(\Omega_\asM -^\ast \Omega_\asD)$ and
	$\partial_{C}O_i = (C \cap \partial O_i)$ is its cut boundary with $C =
	(\Omega_\asD \cap^\ast \Omega_\asM)$. They contribute $+\ecc[\mathbb{O}_i; C]$
	to the total change of EC.
\end{itemize}

\begin{coro}
	A feature $\mathbb{F}$ does not contribute to the topological change (in terms
	of EC) from $\Omega_\asD$ to $\Omega_\asM$ iff it is simple with respect to $C
	= (\Omega_\asD \cap^\ast \Omega_\asM)$.
\end{coro}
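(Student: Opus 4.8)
The plan is to read the corollary straight off the per-feature decomposition \eq{eq_main_simple} together with the definition of $C$-simplicity, so that the statement reduces to unpacking two definitions. First I would recall that \eq{eq_main_simple} writes the global discrepancy as the signed sum
\begin{equation*}
\ec[\Omega_\asM] - \ec[\Omega_\asD] = \sum_{1 \leq i \leq n_F} \pm \ecc[\mathbb{F}_i; C],
\end{equation*}
in which the additive quantity that the decomposition attributes to a single feature $\mathbb{F}_i$ is exactly its signed ECC: a term $-\ecc[\mathbb{U}_i; C]$ for each UD feature and $+\ecc[\mathbb{O}_i; C]$ for each OD feature, as spelled out in the two itemized cases preceding the corollary. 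Thus the phrase ``$\mathbb{F}$ does not contribute to the topological change'' is, by construction of this sum, the assertion that the single summand $\pm \ecc[\mathbb{F}; C]$ attached to $\mathbb{F}$ equals zero.

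Next I would dispatch both directions of the biconditional at once. Since the sign prefactor is always $\pm 1$, and in particular never zero, the summand $\pm \ecc[\mathbb{F}; C]$ vanishes if and only if $\ecc[\mathbb{F}; C] = 0$. By the defining equation \eq{eq_def_EC_feat}, the condition $\ecc[\mathbb{F}; C] = 0$ is precisely $\ec[F] - \ec[\partial_C F] = 0$, which is the definition of $\mathbb{F}$ being simple with respect to $C$. Chaining these equivalences gives: $\mathbb{F}$ contributes nothing to the EC change $\iff$ its summand is zero $\iff$ $\ecc[\mathbb{F}; C] = 0$ $\iff$ $\mathbb{F}$ is $C$-simple, which is exactly the claim.

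The only point requiring care, and the closest thing to an obstacle here, is interpretive rather than computational: I would make explicit that ``does not contribute'' refers to the vanishing of the feature's \emph{own} term in \eq{eq_main_simple}, and not to the vanishing of the global difference $\ec[\Omega_\asM] - \ec[\Omega_\asD]$. A non-simple feature can coexist with a globally unchanged EC whenever its nonzero contribution is cancelled by those of other features, which is exactly the situation illustrated in Figs. \ref{fig_global_1}, \ref{fig_global_2}, and \ref{fig_global_3}. Once this local-versus-global distinction is fixed, the corollary is a direct restatement of the definition of simplicity in the language of the per-feature decomposition, and no further work is needed.
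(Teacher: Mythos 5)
Your proposal is correct and follows exactly the route the paper intends: the corollary is an immediate unpacking of the per-feature decomposition \eq{eq_main_simple} together with the definition of $C$-simplicity in \eq{eq_def_EC_feat}, which is precisely how the paper treats it (stating right after that ``simple features do not contribute anything''). Your explicit remark distinguishing a feature's own vanishing summand from the vanishing of the global difference $\ec[\Omega_\asM] - \ec[\Omega_\asD]$ is a worthwhile clarification consistent with the cancellation scenarios the paper illustrates.
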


The above results provide a mechanism to order the UD/OD features in terms of
their topological significance. Each feature's ECC ($\ecc[\mathbb{U}_i; C]$ or
$\ecc[\mathbb{O}_i; C]$) to the global variation (i.e., the terms on the
right-hand side of \eq{eq_main}) is computed in parallel by subtracting the EC
of its cut boundary from the EC of its solid part. Simple features do not
contribute anything.

Figure \ref{fig_analysis} illustrates in 2D how the EC contributions are used to
distinguish between simple features such as a rounded corner and non-simple
features such as broken beams and filled holes. Even for situations in which
there is no change in the total EC (i.e., the left-hand side of \eq{eq_main} or
\eq{eq_def_EC_feat} vanished), the analysis reveals how different features
cancel each other's additive contributions. Each feature can thus be {\it
	corrected independently}, within the limitations of the AM process, to achieve
an as-manufactured shape that is both globally and locally equivalent to the
as-designed shape.

\subsection{Correcting Design and Process} \label{sec_correct}

Once the contributions of different UD/OD features to the overall EC are
determined, the deposition policy can be changed locally to modify the AM
instrument's motion and its resulting as-manufactured shape:
\begin{itemize}
	\item Simple UD/OD features do not contribute anything to the global EC (i.e.,
	$\ec[\mathbb{F}_i; C] = 0$), hence need not be corrected as far as the EC is of
	concern.
	\item If a UD feature contributes $-\ec[\mathbb{U}_i; C] \neq 0$, it should be
	brought back to $\Omega_\asM$ by including AM instrument configurations whose
	sweep of MMN deposits in that region, without affecting other regions.
	\item If an OD feature contributes $+\ec[\mathbb{O}_i; C] \neq 0$, it should be
	eliminated from $\Omega_\asM$ by excluding AM instrument configurations whose
	sweep of MMN deposits in that region, without affecting other regions.
\end{itemize}	
Common examples in 3D are (recalling \eq{eq_EC_Betti}):
\begin{itemize}
	\item A bridge whose solid component is simply-connected (i.e., $\ec[F_i] =
	1-0+0 = 1$) but connects locally disconnected regions in $C$ through $k \geq 2$
	simply-connected surfaces (i.e., $\ec[\partial_C F_i] = k(1-0+0) = k$). Hence
	we have $\ec[\mathbb{F}_i; C] = (1-k) \leq 1$.
	\item A tunnel whose solid component is simply-connected (i.e., $\ec[F_i] = 1$)
	but has a cut boundary with $C$ that is fully-cylindrical (i.e.,
	$\ec[\partial_C F_i] = 1-1+0 = 0$). Hence we have $\ec[\mathbb{F}_i; C] = (1-0)
	= +1$.
	\item A cavity whose solid component is simply-connected (i.e., $\ec[F_i] =1$)
	but has a cut boundary with $C$ that is fully spherical (i.e., $\ec[\partial_C
	F_i] = 1-0+1 = 2$). Hence we have $\ec[\mathbb{F}_i; C] = (1-2) = -1$.
\end{itemize} 
If these features were under- or over-deposited---i.e., either existed in
$\Omega_\asD$ but are missing from $\Omega_\asM$ (UD), or were not part of
$\Omega_\asD$ but appeared in $\Omega_\asM$ (OD)---they will contribute a
nonzero EC to the total EC.

Once the problematic (i.e., non-simple) features are identified and ranked based
on their ECCs, we can make  local adjustments to the AM instrument's motion,
such that its resulting sweep of MMN in \eq{eq_sweep} includes/excludes UD/OD
features of topological consequence.
\begin{itemize}
	\item For every non-simple UD feature $\mathbb{U}_i = (U_i, \partial_C U_i)$,
	we locally {\it decrease} the OMRT field in \eq{eq_lambda_field} by adjusting
	the weights, so that it includes configurations in the vicinity of the
	originally under-deposited region.
	\item For every non-simple UD feature $\mathbb{O}_i = (O_i, \partial_C O_i)$,
	we locally {\it increase} the OMRT field in \eq{eq_lambda_field} by adjusting
	the weights, so that it excludes configurations in the vicinity of the
	originally over-deposited region.
\end{itemize}
Every update of the OMRT field results in an updated AM instrument motion, hence
a new as-manufactured shape. Adding/removing non-simple UD/OD features can have
side effects on other UD/OD features. However, if the modifications are small
enough, the classification of features may not change dramatically and the model
can be corrected after several iterations. In the interest of keeping this paper
short and focused, we will discuss iterative design and process correction
elsewhere.
\section{Results} \label{sec_results}

In this section, we show examples of applying the developed analysis of Sections
\ref{sec_model} and \ref{sec_UDOD} to complex 3D structures. The implementation
is in C++ (on \textsf{Linux}). All 3D models (except Fig. \ref{fig_helix_EC})
are obtained from \textsf{GrabCAD}
(\href{https://grabcad.com}{https://grabcad.com}) and \textsf{Thingiverse}
(\href{https://www.thingiverse.com/}{www.thingiverse.com}) and are visualized using
\textsf{ParaView} (\href{https://www.paraview.org/}{www.paraview.org}).

\begin{figure}[ht!]
	\centering
	\includegraphics[width=0.46\textwidth]{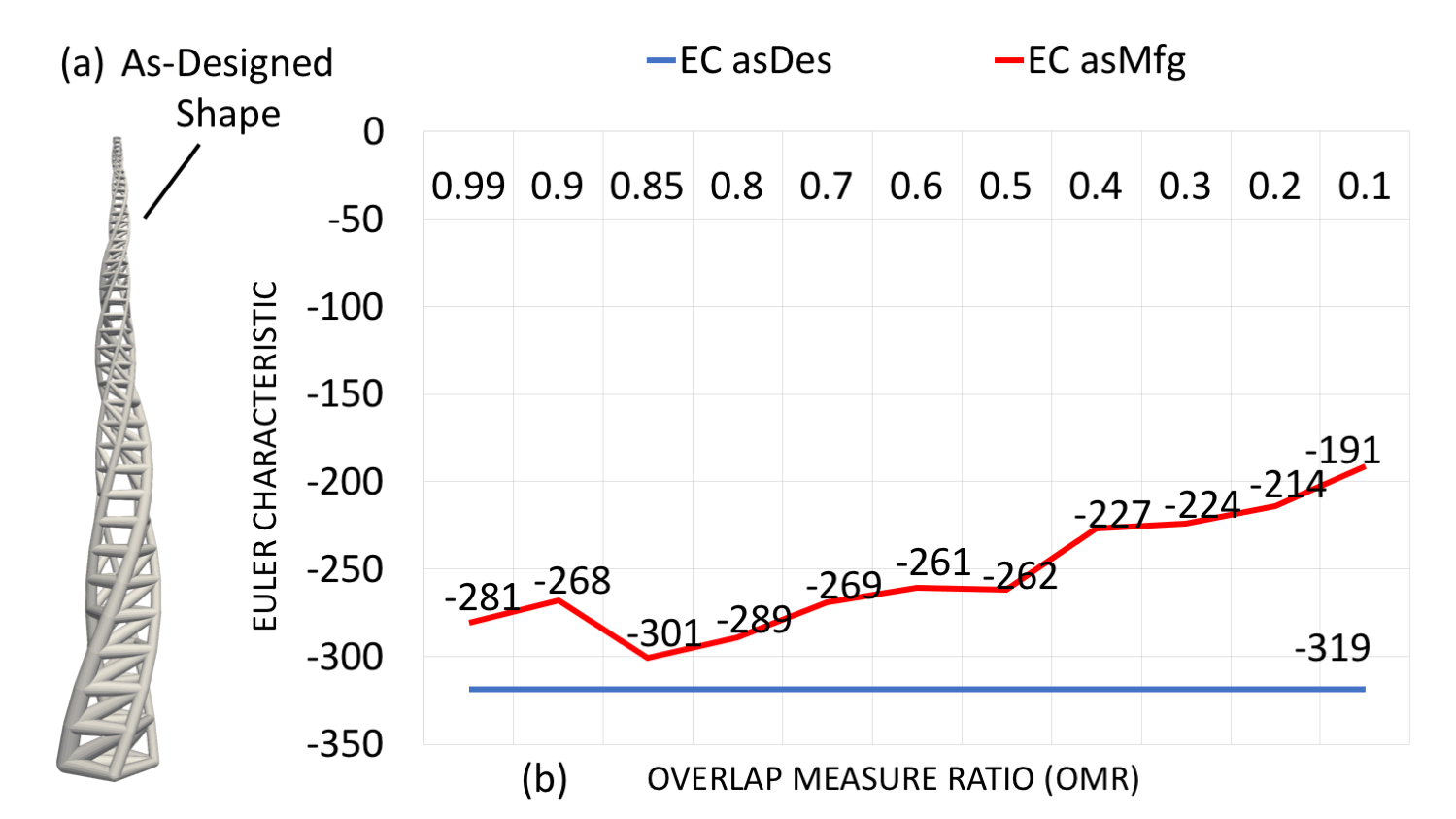}
	\caption{The effect of OMR on as-manufactured EC for a helical lattice
	structure (courtesy of \textsf{Siemens} Corporate Research) voxelized at $256
	\times 256 \times 2048$ using a spherical MMN of diameter $\sim 10$ voxels.}
\label{fig_helix_EC}
	\vspace{0.5em}
	\centering
	\includegraphics[width=0.46\textwidth]{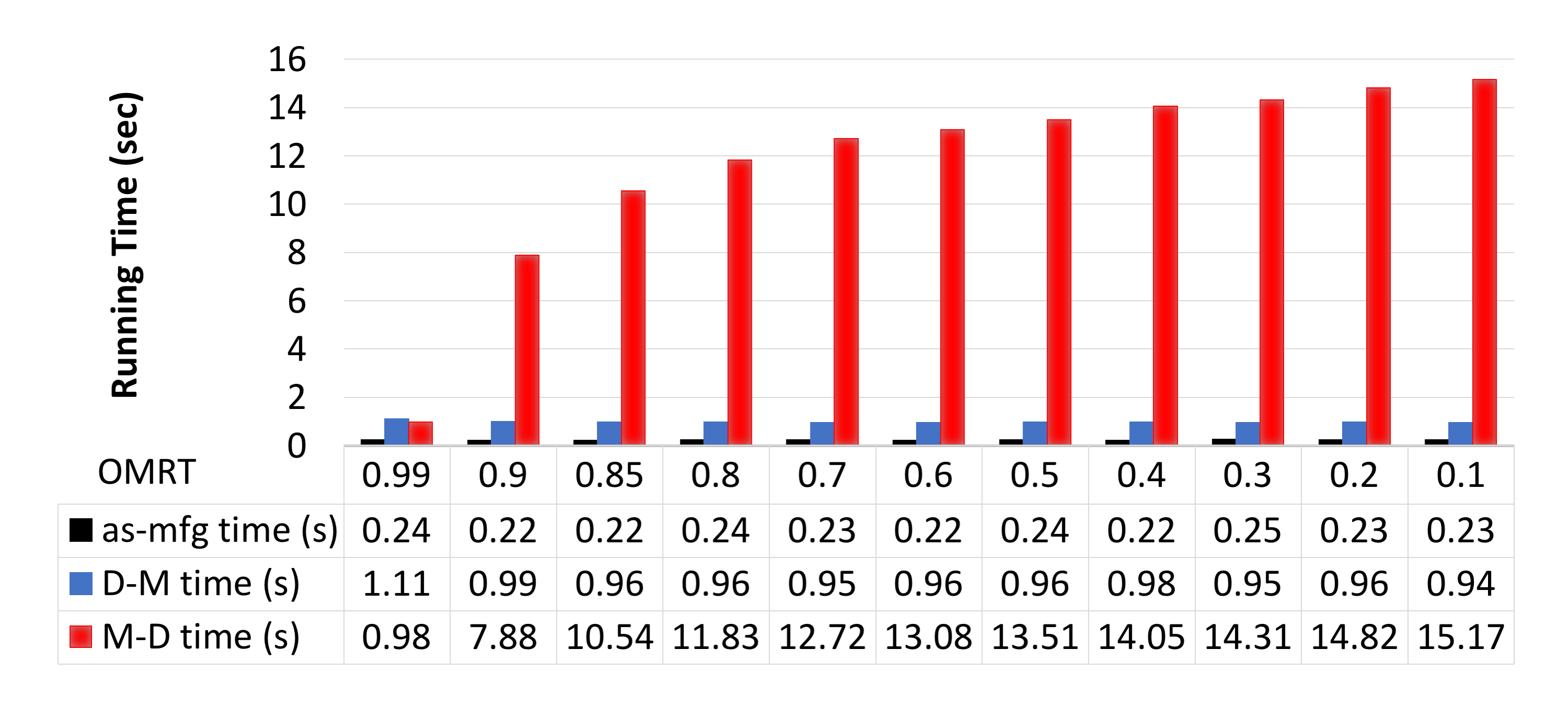}
	\caption{Computation times for as-manufactured model and local EC contributions
	of UD/OD features. The former requires two FFT-based convolutions and
	superlevel set operations, computed using \textsf{ArrayFire} on \textsf{NVIDIA
		GTX 1080} GPU (2,560 CUDA cores, 8GB RAM) via \textsf{OpenCL}. The latter was
	computed in parallel on \textsf{Intel Core}$^\textsf{TM}$ i7-7820X CPU @
	3.60GHz (8 cores, 32GB RAM) via \textsf{OpenMP}.} \label{fig_helix_time}
\end{figure}

\begin{figure*}
	\centering
	\includegraphics[width=0.96\textwidth]{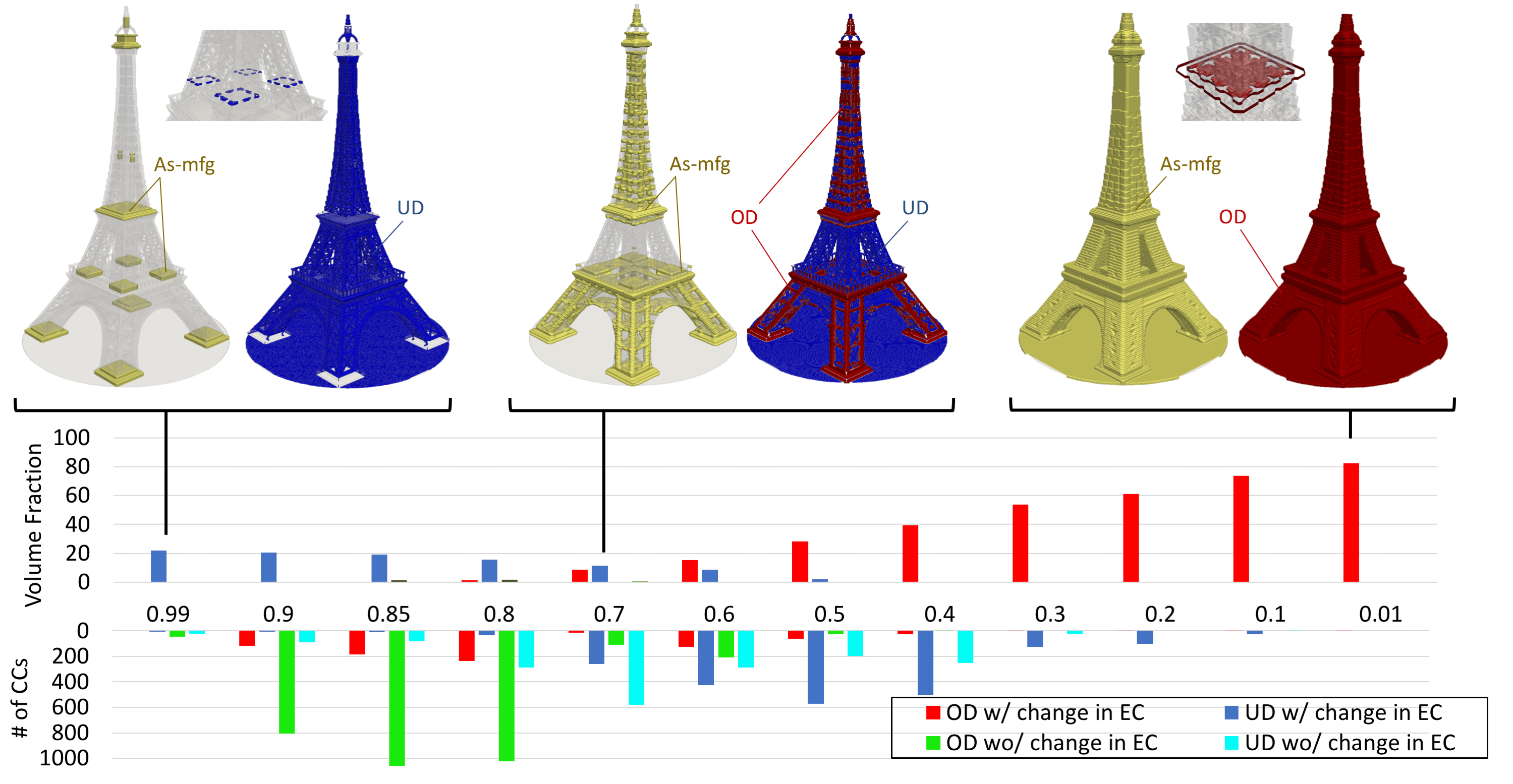}
	\caption{The effect of $\lambda$ on the volume fraction (VF) and number
	connected components (\#CC) for UD/OD features that do and do not contribute to
	the global EC (i.e., simple and non-simple features, respectively). The 3D model is obtained from \textsf{GrabCAD}.}
\label{fig_eiffel}
\end{figure*}

\begin{figure*}
	\centering
\includegraphics[width=0.95\textwidth]{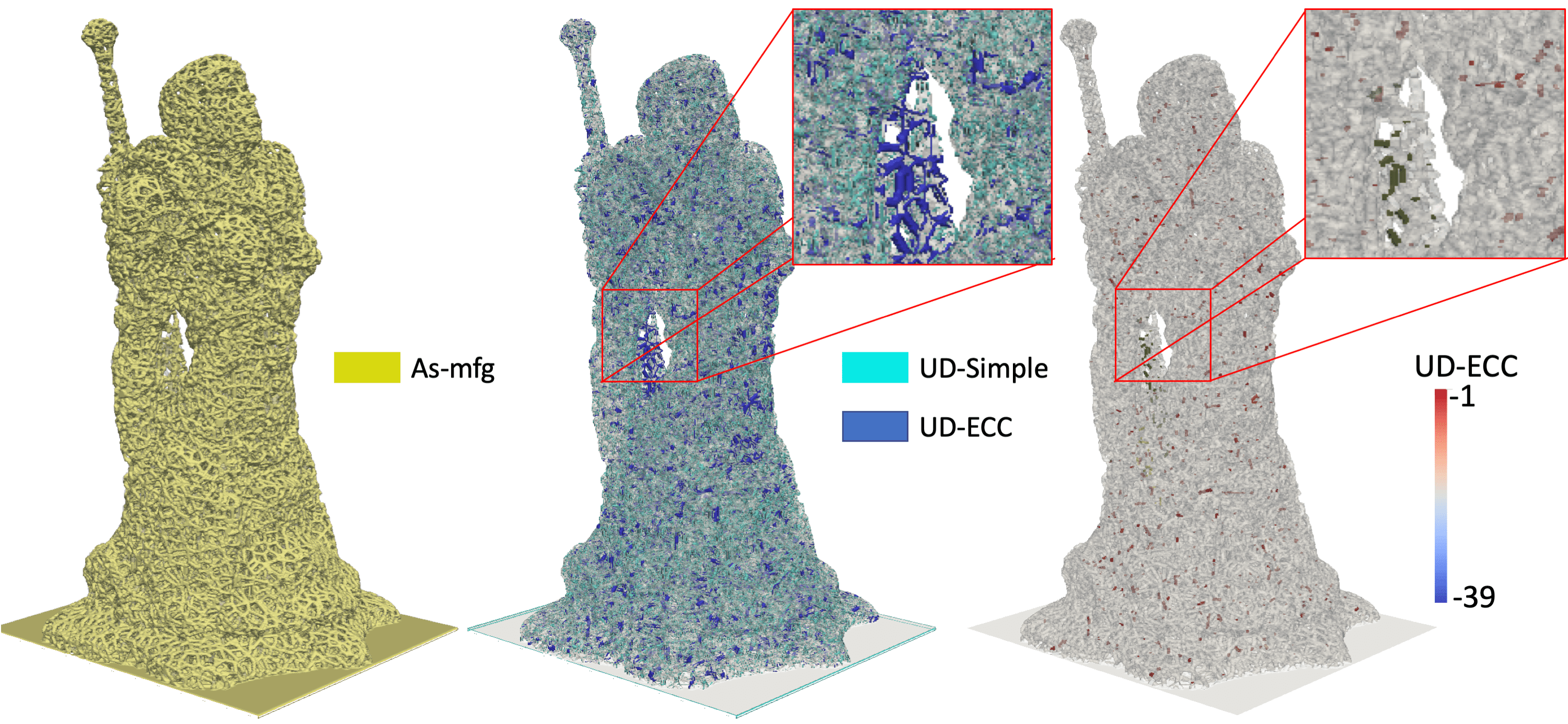}
\caption{Knight example from \textsf{Thingiverse}. (a) the US features with nonzero
	ECC (blue) and zero ECC (cyan); (b) the OD features with nonzero ECC (red) and
	zero ECC (green). The ECC field for UD features is shown in (c).}
\label{fig_knight}
\end{figure*}

\begin{table*}
	\centering
	\caption{The volume fraction (VF), number of connected components (\#CC), and
	computation time for UD/OD ECC computations. Note that the VF stands for the
	ratio of UD/OD volumes to the as-designed volume.}
	\begin{tabular}[t]{cccccccc}
		\hline \hline
		Example & $\lambda$ & resolution & UD VF (\#CC) & OD VF (\#CC)& as-mfg time
		(s) & UD time (s) & OD time (s) \\
		\hline
		Knight &$0.95$ &$256^2\times1024$ & 1.33 (6656) & 0 (0) & 0.12 & 3.92 & 0.48 \\ 
		Moomin &$0.95$ &$512^3$ & 10.3 (3659) & 0.63 (2903) & 0.28 & 6.36 & 5.35 \\ 
		Copter &$0.95$ &$512^3$ & 0.36 (1522) & 2.63 (3349) & 0.23 & 2.42 & 9.87 \\ 
		\hline
	\end{tabular}
	\label{tab:resultsSummary}
\end{table*}%

\begin{figure*} [ht!]
	\centering
	\includegraphics[width=0.96\textwidth]{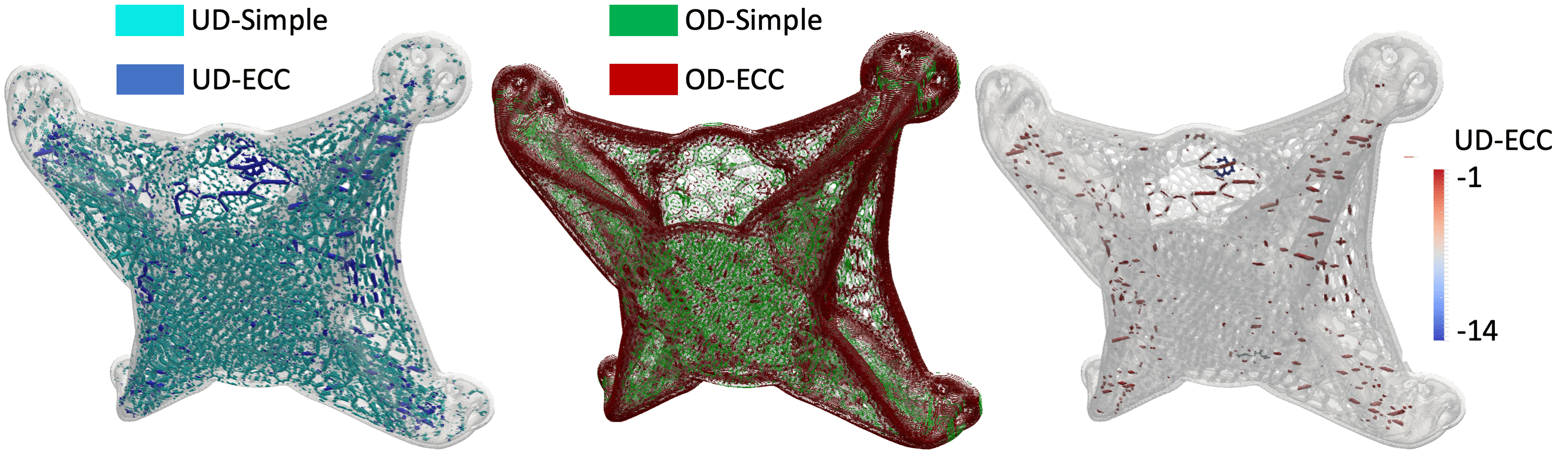}
	\includegraphics[width=0.96\textwidth]{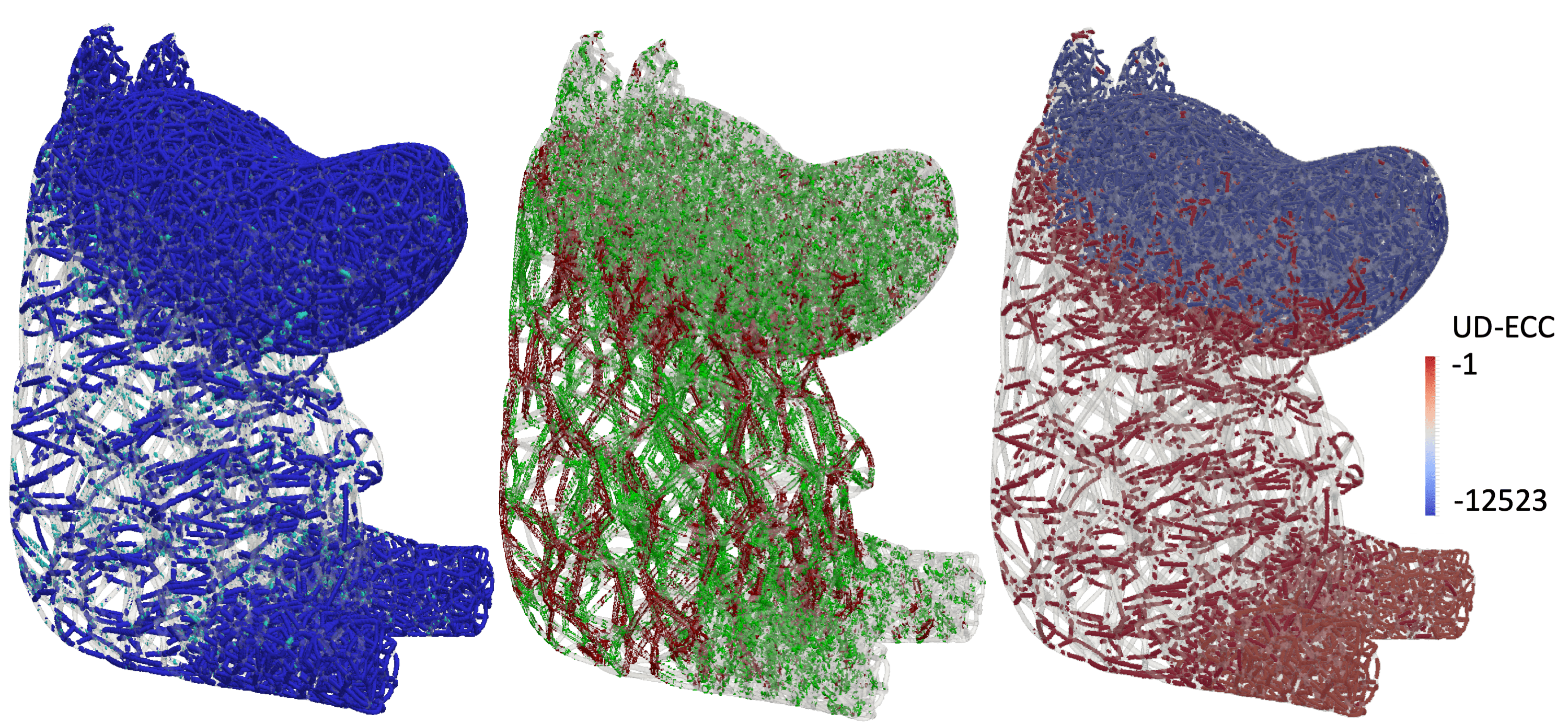}
	\caption{Copter example from \textsf{GrabCAD} (top) and Moomin example form \textsf{Thingiverse} (bottom). Left:
	the UD features with nonzero ECC (blue) and zero ECC (cyan). Middle: the OD
	features with nonzero ECC (red) and zero ECC (green). Right: the ECC field for
	UD features.} \label{fig_examps}
\end{figure*}

Figure \ref{fig_helix_EC} (a) shows a helical lattice in 3D with as-designed EC
of $\ec[\Sigma_\asD] = 1-320+0 = -319$. The global ECs for as-manufactured
variants $\Omega_{\asM, \lambda}$ obtained from the one-parametric formula in
\eq{eq_family_shape} are plotted in Fig. \ref{fig_helix_EC} (b) against the
global OMR threshold changing from $\lambda \to 1^-$ (minimized UD) to $\lambda \to
0^+$ (conservative OD). UD results in larger number of disconnected regions,
hence an increase in the (negative) EC. However, as the allowance for OD
increases by reducing the OMR, we have both UD/OD that alleviate the increase in
number of components while decreasing the number of tunnels with competing
effects on EC due to \eq{eq_EC_Betti}. We could apply persistent homology
\cite{Edelsbrunner2008persistent} to get a better understanding of the critical
OMR values for the birth/death of these topological features. However, the
global EC/BN plots or persistence graphs/barcodes lack spatial information
needed for design correction.

Figure \ref{fig_helix_time} illustrates the CPU times for parallel computation
of ECC for UD/OD features. Notice that as the allowance for over-deposition is
increased by decreasing the OMRT, the time increases almost linearly with the
overlap measure, which is proportional to the volume (and number of voxels) for
OD deviations from design.

The Eiffel tower is an interesting structure with features at different size
scales. Figure \ref{fig_eiffel} shows the effect of OMRT on the generation of
UD/OD features. Figures \ref{fig_knight} and \ref{fig_examps}
illustrate topological analysis on high-resolution geometries with intricate
interior lattices. Table \ref{tab:resultsSummary} summarizes the results for the
three shapes, including running times.

\section{Conclusion}

 A design's manufacturability (via an AM process) is largely determined by the
AM machine's ability to print the shape within `acceptable limits'. The notion
of geometric dimensioning and tolerancing has been used successfully to define
and check these limits for conventionally manufactured parts, but it is
challenging to define features of size for AM, and efforts are ongoing.
Nonetheless it is clear that combinations of manufacturing plans and 3D printer
resolutions will produce deviations between as-designed and as-manufactured
shapes, with no systematic procedure to check and control this deviation.

In this paper we have demonstrated an approach to topologically analyze and
classify the  deviations between as-designed and as-manufactured shapes. The
approach uses fundamental properties of the Euler characteristic to quantify the
additive contributions of disjoint regions of over/under-deposition. Ambiguities
of global topological analysis arising from the canceling effects of these
regions on the local topological changes are resolved. Important geometric
features that appear/disappear or are otherwise deformed are identified. The
precise spatial information is used to provide remedial measures to retain the
geometric feature and the local topology.

Our comparative topological analysis (CTA) in Section \ref{sec_UDOD} is
independent of any particular method of computing as-manufactured shapes, hence
is not limited to the measure-theoretic model presented in Section
\ref{sec_model}. Different AM processes may require domain-specific simulation
methods that take into account additional process constraints such as
solidification, build orientation, support structure, chamber temperature, and
so on.

There are several possible extensions to this paper. Our approach considers a
single as-manufactured outcome at a time. It provides little insight on the
persistence of local changes across the spectrum of as-manufactured variants
that one can obtain by changing the 3D printer specs or deposition
policies---e.g., by changing the shape/size of MMN, threshold on OMR, etc.

As opposed to analyzing a given as-manufactured model for topological
consistency, we may alternatively try to construct a process plan that considers
topological changes and produces a part that will (by construction) be
topologically equivalent to the as-designed shape.

Solving this problem will also help address broader challenges in design for
additive manufacturing (DfAM).

\section*{Acknowledgement}

This research was developed with funding from the Defense Advanced Research
Projects Agency (DARPA). The views, opinions and/or findings expressed are those
of the authors and should not be interpreted as representing the official views
or policies of the Department of Defense or U.S. Government.


\bibliographystyle{elsarticle-num} 
\bibliography{topolMfg}

\end{document}